\newtheorem{lemma}{Lemma}
\newtheorem{remark}{Remark}
\newtheorem{openproblem}{Open Problem}
\newtheorem{corollary}{Corollary}
\newcommand{\ff}{{\mathbb F}}
\newcommand{\Fq}{\mathbb{F}_{q}}
\newcommand{\cA}{\mathcal{A}}
\newcommand{\cC}{\mathcal{C}}
\newcommand{\cF}{\mathcal{F}}
\newcommand{\cH}{\mathcal{H}}
\newcommand{\cK}{\mathcal{K}}
\newcommand{\cM}{\mathcal{M}}
\newcommand{\bI}{\mathbf{I}}
\newcommand{\bO}{\mathbf{O}}
\newcommand{\bP}{\mathbf{P}}
\newcommand{\bt}{\boxtimes}
\newcommand{\smat}{\ \ }
\newcommand{\rc}{\cdot}
\DeclareMathOperator{\diag}{diag}
\newlength{\spacelen}
\newcommand{\ws }[1]{\settowidth{\spacelen}{#1}\makebox[\spacelen]{}}
\newcommand{\wid}[3]{\settowidth{\spacelen}{#1}\makebox[\spacelen][#2]{$#3$}}
\newcommand{\wn}{\ws{$-$}}
\newcommand{\rL}[2][\ws{$-$}]{\wid{$-L_{00}$}{l}{#1 L_{#2}}}
\newcommand{\bzero}{\mathbf{0}}
\newcommand{\gbzero}{\color{gray}\mathbf{0}\color{black}}
\begin{document}
	
	\title{MSR Codes with Linear Field Size and Smallest Sub-packetization for Any Number of Helper Nodes}
	
	\author{
		Guodong~Li,
		Ningning~Wang,
		Sihuang~Hu, 
		and~Min~Ye
		\thanks{
			Research partially funded by
			National Key R\&D Program of China under Grant No. 2021YFA1001000,
			National Natural Science Foundation of China under Grant No. 12001322 and 12231014
			and a Taishan scholar program of Shandong Province.
		}
		\thanks{
			Guodong Li, Ningning Wang and Sihuang Hu are with Key Laboratory of Cryptologic Technology and Information Security, Ministry of Education, Shandong University, Qingdao, Shandong, 266237, China and School of Cyber Science and Technology, Shandong University, Qingdao, Shandong, 266237, China.
			S. Hu is also with Quan Cheng Laboratory, Jinan 250103, China.
			Email: \{guodongli, nningwang\}@mail.sdu.edu.cn, husihuang@sdu.edu.cn, yeemmi@gmail.com
		}
	}
	\date{}
	
	\maketitle
	
	\begin{abstract}
		An $(n, k, \ell)$ array code has $k$ information coordinates and $r = n - k$ parity coordinates, where each coordinate is a vector in $\mathbb{F}_q^{\ell}$ for some finite field $\mathbb{F}_q$.
		An $(n, k, \ell)$ MDS array code has the additional property that any $k$ out of $n$ coordinates suffice to recover the whole codeword.
		Dimakis \emph{et al.} considered the problem of repairing the erasure of a single coordinate and proved a lower bound on the amount of data transmission that is needed for the repair.
		A minimum storage regenerating (MSR) code with repair degree $d$ is an MDS array code that achieves this lower bound for the repair of any single erased coordinate from any $d$ out of $n-1$ remaining coordinates.
		An MSR code has the optimal access property if the amount of accessed data is the same as the amount of transmitted data in the repair procedure.
		
		The sub-packetization $\ell$ and the field size $q$ are of paramount importance in MSR code constructions.
		For optimal-access MSR codes, Balaji \emph{et al.} proved that $\ell \geq s^{\left\lceil n/s \right\rceil}$, where $s = d-k+1$.
		Rawat \emph{et al.} showed that this lower bound is attainable for all admissible values of $d$ when the field size is exponential in $n$.
		After that, tremendous efforts have been devoted to reducing the field size.
		However, so far, reduction to a linear field size is only available for $d\in\{k+1,k+2,k+3\}$ and $d=n-1$.
		In this paper, we construct the first class of explicit optimal-access MSR codes with the smallest sub-packetization $\ell = s^{\left\lceil n/s \right\rceil}$ for all $d$ between $k+1$ and $n-1$,
		resolving an open problem in the survey (Ramkumar \emph{et al.}, Foundations and Trends in Communications and Information Theory: Vol. 19: No. 4).
		We further propose another class of explicit MSR code constructions (not optimal-access) with an even smaller sub-packetization $s^{\left\lceil n/(s+1)\right\rceil}$ for all admissible values of $d$, making significant progress on another open problem in the survey.
		Previously, MSR codes with $\ell=s^{\left\lceil n/(s+1)\right\rceil}$ and $q=O(n)$ were only known for $d=k+1$ and $d=n-1$.
		The key insight that enables a linear field size in our construction is to reduce $\binom{n}{r}$ global constraints of non-vanishing determinants to $O_s(n)$ local ones, which is achieved by carefully designing the parity check matrices.
	\end{abstract}

	\section{Introduction}\label{sect:intro}
	
	{M}{aximum} Distance Separable (MDS) codes are widely used in storage systems because they provide the maximum failure tolerance for a given amount of storage overhead.
	With the emergence of large-scale distributed storage systems, the notion of \emph{repair bandwidth} was introduced to measure the efficiency of recovering the erasure of a single codeword coordinate.
	Following the literature on distributed storage, a codeword coordinate is referred to as a (storage) node, and the erasure of a codeword coordinate is called a node failure.
	Repair bandwidth is the amount of data that needs to be downloaded from the remaining nodes to repair the failed node.
	For an $(n,k)$ MDS array code, Dimakis \emph{et al.}~\cite{Dimakis10} showed that if we access $d$ out of $n-1$ remaining nodes, then at least $1/(d-k+1)$ fraction of data needs to be downloaded from each of the $d$ helper nodes.
	The range of $d$ is between $k+1$ and $n-1$.
	A minimum storage regenerating (MSR) code with {\em repair degree} $d$ is an MDS code that achieves the minimum repair bandwidth for the repair of any single node failure from any $d$ helper nodes.
	We further say that an MSR code has the \emph{optimal access} property if the amount of accessed data in the repair procedure is equal to the minimum repair bandwidth.
	
	The study of MSR codes is divided into two major categories---scalar MSR codes and MSR array codes.
	Reed-Solomon (RS) code is the most prominent example of scalar MDS codes, whose repair problem was first studied by Guruswami and Wootters  \cite{Guruswami16STOC, Guruswami16}.
	In particular, Guruswami and Wootters viewed each codeword coordinate as a vector over a subfield of RS codes' symbol field and proposed repair schemes with smaller bandwidth than the naive repair scheme.
	The degree of field extension is called sub-packetization.
	Building on the framework in \cite{Guruswami16STOC, Guruswami16}, Tamo \emph{et al.}~\cite{Tamo17RS} proposed an RS code construction with optimal repair bandwidth, which is a scalar MSR code.
	In contrast, for an $(n,k,\ell)$ MSR array code, each codeword coordinate is a vector in $\mathbb{F}_q^{\ell}$ for some field $\mathbb{F}_q$, where the parameter $\ell$ is sub-packetization.
	Compared to scalar MSR codes, their array code counterparts allow more flexibility in the code construction, which results in a smaller sub-packetization.
	Indeed, the best-known sub-packetization of MSR array code construction is $\exp(O(n))$ while the sub-packetization of (high-rate) scalar MSR codes needs to be $\exp(\Theta(n\log(n))$, according to a lower bound in \cite{Tamo17RS}.
	For this reason, we only consider MSR array codes in this paper.
	
	\subsection{Previous results on MSR codes with small sub-packetization}
	Numerous constructions of MSR codes have been proposed in recent years.
	The first explicit construction of MSR codes is the product matrix construction \cite{Rashmi11}.
	Yet it is limited to the low code rate regime.
	After that, explicit constructions of high-rate MSR codes were given in \cite{Tamo13, Wang16, Ye16, Ye16a, Sasid16, Raviv17, Tamo17RS, Li18, Duursma21, Vajha21}.
	A complete list of constructions and parameters of MSR codes can be found in an excellent survey paper~\cite{CIT-115}.
	In practical storage systems, the number of parity nodes $r=n-k$ is usually a rather small number in order to achieve a small storage overhead.
	In light of this, we consider the regime of fixed $r$ and growing $n$ throughout this paper.
	
	All the constructions mentioned above have exponentially large sub-packetization, which is in fact inevitable for MSR codes.
	More precisely, for MSR codes with repair degree $d=n-1$, Goparaju \emph{et al.} \cite{Goparaju14} proved that $\ell\ge \exp(\Theta(\sqrt{n}))$.
	This lower bound was later improved to $\ell\ge \exp(\Theta(n))$ by Alrabiah and Guruswami \cite{Alrabiah19}.
	Moreover, a tight lower bound was known for optimal-access MSR codes: It was shown in \cite{Tamo14} that $\ell \ge r^{\frac{k-1}{r}}$ for such codes with repair degree $d=n-1$; Balaji \emph{et al.} \cite{Balaji22} generalized this bound to a restricted class of optimal-access MSR codes for all admissible values of $d$.
	The lower bound in \cite{Balaji22} is $\ell \ge s^{\lceil\frac{n}{s}\rceil}$, where $s=d-k+1$.
	
	For $d=n-1$, papers~\cite{Ye16a, Sasid16, Li18} constructed
	optimal-access MSR codes with field size $q=O(n)$ and sub-packetization  $\ell=r^{\lceil\frac{n}{r}\rceil}$, matching the lower bound mentioned above.
	For general values of $d$, a construction of optimal-access MSR codes with $\ell=s^{\lceil\frac{n}{s}\rceil}$ was presented in~\cite{Rawat16}.
	However, the construction in~\cite{Rawat16} is not explicit and requires an exponentially large field size.
	For $d<n-1$, optimal-access MSR codes with the smallest sub-packetization $\ell=s^{\lceil\frac{n}{s}\rceil}$ and linear field size are only known for $d\in\{k+1,k+2,k+3\}$, which come from a recent paper~\cite{Vajha21}.
	Obtaining such code constructions for general values of $d$ was listed as an open problem in the survey paper \cite{CIT-115}:
	\begin{openproblem}\cite[Open Problem 3]{CIT-115}\label{op-access}
		Provide explicit constructions of optimal-access MSR codes having least-possible sub-packetization level, for all possible $(n, k, d)$, with $d<n-1$.
	\end{openproblem}
	
	If we do not require the optimal-access property, then the best-known sub-packetization level is $\ell=s^{\left\lceil n/(s+1)\right\rceil }$ for MSR codes with repair degree $d$.
	MSR codes with $\ell=s^{\left\lceil n/(s+1)\right\rceil }$ and linear field size are only known for $d=n-1$ (see~\cite{Wang16}) and $d=k+1$ (see~\cite{LWHY22}).
	In~\cite{Liu22, Zhang23, Li23}, explicit constructions of MSR codes with $\ell=s^{\left\lceil\frac{n}{2}\right\rceil}$ for any $d$ were presented.
	Another open problem in \cite{CIT-115} concerns MSR codes without the optimal-access requirement:
	\begin{openproblem}\cite[Open Problem 4]{CIT-115}\label{op-general}
		Construct MSR codes with least-possible sub-packetization level for all $(n$, $k$, $d)$.
		(There is no optimal-access requirement here).
	\end{openproblem}
 
	To the best of our knowledge, all known explicit $(n,k)$ MSR code constructions with linear field size and $d<n-1$ helper nodes are listsd in Table~\ref{tab:cmp}.
 
	\begin{table}[ht]
		\centering
		\begin{tabular}{|l|l|c|}
			\hline
			\multicolumn{3}{|c|}{Optimal-access}                                                                       \\
			\hline
			\hline
			Codes                      & Sub-packetization $\ell$                                    & Restrictions    \\
			\hline
			\cite[Section VIII]{Ye16}  & $s^{n}$                                                     &                 \\
			\hline
			\cite{CB}                  & $s^{n}$                                                     &                 \\
			\hline
			\cite{Liu22}               & $s^{\left\lceil n/2 \right\rceil}$                          &                 \\
			\hline
			\cite{Vajha21}             & $s^{\left\lceil n/s \right\rceil}$      (smallest possible) & $s\in\{2,3,4\}$ \\
			\hline
			Section~\ref{sect:code1}   & $s^{\left\lceil n/s \right\rceil}$      (smallest possible) &                 \\
			\hline
			\multicolumn{3}{c}{}                                                                                       \\
			\hline
			\multicolumn{3}{|c|}{Not optimal-access}                                                                   \\
			\hline
			\hline
			Codes                      & Sub-packetization $\ell$                                    & Restrictions    \\
			\hline
			\cite[Section IV]{Ye16}    & $s^{n}$                                                     &                 \\
			\hline
			\cite{LWHY22}              & $s^{\left\lceil n/3 \right\rceil}$      (best-known)        & $d=k+1$         \\
			\hline
			\cite{Zhang23} \cite{Li23} & $s^{\left\lceil n/2 \right\rceil}$                          &                 \\
			\hline
			Section~\ref{sect:code2}   & $s^{\left\lceil n/(s+1) \right\rceil}$    (best-known)      &                 \\
			\hline
			\multicolumn{3}{c}{}
		\end{tabular}
		\caption{Explicit $(n,k)$ MSR code constructions with linear field size and $d<n-1$ helper nodes, where $s=d-k+1$.
			``smallest possible'' means that $\ell$ achieves the theoretical lower bound; ``best-known'' means that $\ell$ is the smallest among all the existing constructions.}
		\label{tab:cmp}
	\end{table}
	
	\subsection{Our contribution and key technique}
	In this paper, we propose two classes of explicit MSR code constructions with linear field size $q=O_s(n)$ for any repair degree $d$ between $k+1$ and $n-1$,
	where $O_s$ means that we treat $s$ as a constant\footnote{We can do so because $s$ is upper bounded by $r$, which is assumed to be a constant.}.
	The first class is optimal-access and has sub-packetization $s^{\left\lceil n/s \right\rceil}$, matching the lower bound.
	This settles Open Problem~\ref{op-access}.
	The second class is not optimal-access but has a smaller sub-packetization $s^{\left\lceil n/(s+1)\right\rceil }$,
	which is the best-known sub-packetization among all the existing MSR code constructions.
	This constitutes significant progress on Open Problem~\ref{op-general}.
	
	We give an overview of our proof techniques here.
	We define an $(n, k, \ell)$ array code $\cC$ using the parity check equations
	\begin{equation}\label{eq:parity_check}
		\cC=\{(C_0, \dots, C_{n-1}): H_0C_0 + \cdots + H_{n-1}C_{n-1} = \bzero\},
	\end{equation}
	where each node $C_i = (C_i(0), C_i(1), \dots, C_{i}(\ell-1))^T$ is a column vector in $\Fq^{\ell}$,
	each $H_i$ is an $r\ell \times \ell$ matrix over $\Fq$, and $\bzero$ on the right-hand side is the all-zero column vector of length $r\ell$.
	
	The MDS property of $\cC$ is equivalent to the following $\binom{n}{r}$ \emph{global constraints} of non-vanishing determinants:
	\begin{equation}\label{eq:detMDS}
		\begin{aligned}
			& \det([H_{i_1} \smat  H_{i_2} \smat \cdots \smat H_{i_r}])\neq0 \text{~for~all~} \\& \cF = \{i_1, i_2, \dots, i_r\} \subseteq\{0,\dots,n-1\}.
		\end{aligned}
	\end{equation}
	We divide $n$ parity check submatrices $H_0, H_1,\dots,H_{n-1}$ into $n/s$ groups of size $s$ and
	$n/(s+1)$ groups of size $s+1$ in the first and the second construction, respectively.
	By carefully designing the inner structures of each $H_i$, we are able to make the matrices $[H_{i_1} \smat  H_{i_2} \smat \cdots \smat H_{i_r}]$ in global
	constraints~\eqref{eq:detMDS} block upper triangular, and
	reduce $\binom{n}{r}$ global constraints to
	$O_s(n)$ \emph{local (in-group) constraints} of non-vanishing determinants.
	This allows us to prove that a linear field size suffices for the MDS property of our construction.
	
	On the other hand, thanks to the recursive structure of $H_i$, the repair scheme for any single node failure follows directly from the MDS property.
	To elaborate, suppose that some node $C_i$ is failed, then we can do the repair as follows:
	Firstly, we split the failed node $C_i\in\Fq^{\ell}$ into $s$ new nodes, denoted as $\bar C_i^{\left\langle z\right\rangle }\in\Fq^{\ell/s}, 0\le z \le s-1$.
	For each surviving node $C_j$ where $j\neq i$, we download a fraction of $1/s$ symbols from $C_j$ to construct a new node $\bar C_j\in\Fq^{\ell/s}$.
	Due to the recursive structure of $H_i$, the method we used to prove the MDS property of $\cC$ can be generalized to show that
	$$(\bar{C}_0,\dots, \bar{C}_{i-1},\quad \bar{C}_{i}^{\left\langle 0\right\rangle }, \dots, \bar{C}_{i}^{\left\langle s-1\right\rangle },\quad \bar{C}_{i+1},\dots, \bar{C}_{n-1})$$
	also defines an $(n+s-1,k+s-1,\ell/s)$ MDS array code.
	Therefore, $(\bar{C}_i^{\left\langle z\right\rangle }, 0\le z \le s-1)$, that is $C_i$, can be recovered from any $d = k+s-1$ nodes in the set $\{\bar C_j : 0\le j \le n-1, j\neq i\}$,
	and this repair scheme achieves the cut-set bound.
	
	When discussing our $(n,k,\ell = s^{\lceil n/s \rceil})$ MSR code, we always assume that $n$ is divisible by $s$.
	For the case of $s\nmid n$, the construction can be easily obtained by removing some nodes from a slightly longer code in the divisible case.
	Similarly, when we discuss our $(n,k,\ell = s^{\lceil n/(s+1) \rceil})$ MSR code, we always assume that $n$ is divisible by $s+1$.
	
	The rest of this paper is organized as follows:
	In Section~\ref{sect:preliminary}, we present notations and essential terminology to establish a framework of our code constructions.
	In Section~\ref{sect:code1}, we present the construction of $(n, k, \ell = s^{n/s})$ MSR codes with optimal access property, prove its MDS property and demonstrate that the above class of MDS array codes achieves the optimal repair bandwidth for single node failure.
	In Section~\ref{sect:code2}, we propose another class of MSR codes, which are not optimal-access but with smaller sub-packetization level $\ell = s^{n/(s+1)}$.
	\section{Preliminaries}\label{sect:preliminary}
	
	Given the code length $n$, the code dimension $k$, and the repair degree $d$,
	we set $r = n-k$, $s=d-k+1$.
	The nodes $C_0, C_1, \dots, C_{n-1}$ are divided into $\bar n$ groups.
	For our first MSR code construction in Section~\ref{sect:code1}, we set $\bar n = \lceil n/s \rceil$, while for the second MSR code construction in Section~\ref{sect:code2}, we set $\bar n = \lceil n/(s+1) \rceil$.
	Then the sub-packetization level of both constructions can be written as $\ell = s^{\bar n}$.
	
	Let $\Fq$ be a finite field of order $q$.
	For a positive integer $m$, we represent $[m]$ as the set $\{0,1,\dots, m-1\}$,
	and let $\bI_{m}$ be the $m\times m$ identity matrix over $\Fq$.
	For an integer $a$, we define
	\begin{align*}
		a+[m] & := \{a+i: i\in [m]\}.
	\end{align*}
	For simplicity, we denote the vector $(x_0, x_1, \dots, x_{m-1})$ over $\Fq$ as $x_{[m]}$.
	Additionally, for a subset $\cA\subseteq [m]$, we define a subvector of $x_{[m]}$ as
	$$
	x_{\cA} := (x_i: i\in \cA).
	$$
	
	For an element $x\in \Fq$, and a positive integer $t$, we define a column vector of length $t$ as
	\begin{equation*}
		L^{(t)}(x)  := \left[
		\begin{array}{c}
			1       \\
			x       \\
			x^2     \\
			\vdots  \\
			x^{t-1} \\
		\end{array}
		\right].
	\end{equation*}
	
	To simplify notations, we
	introduce the following matrix operator $\bt$.
	For a matrix $A$ and  an $m\times n$ block matrix $B$ written as
	\begin{equation*}
		B = \begin{bmatrix}
			B_{11} & \cdots & B_{1n} \\
			\vdots & \ddots & \vdots \\
			B_{m1} & \cdots & B_{mn}
		\end{bmatrix},
	\end{equation*}
	we define
	\begin{equation*}
		A\bt B := \begin{bmatrix}
			A\otimes B_{11} & \cdots & A\otimes B_{1n} \\
			\vdots          & \ddots & \vdots          \\
			A\otimes B_{m1} & \cdots & A\otimes B_{mn}
		\end{bmatrix},
	\end{equation*}
	where $\otimes$ is the Kronecker product. Note that the result $A\bt B$ is dependent on
	how the rows and columns of $B$ are partitioned, and we will explicitly specify the partition every time we use this notation.
	
	\begin{remark}\label{remark:block_tensor}
		For any two matrices $M$ and $N$ it is well-known that $M\otimes N$ is permutation equivalent to $N\otimes M$ (c.f. \cite[page 7, equation (6)]{tensorpermutation}).
		By definition
		\begin{equation*}
			B\otimes A = \begin{bmatrix}
				B_{11} \otimes A & \cdots & B_{1n} \otimes A  \\
				\vdots           & \ddots & \vdots            \\
				B_{m1} \otimes A & \cdots & B_{mn} \otimes  A
			\end{bmatrix}.
		\end{equation*}
		For each block entry $B_{ij}$, there exist permutation matrices $P_i$ and $Q_j$ such that
		$$
		A\otimes B_{ij} = P_i(B_{ij} \otimes A)Q_j.
		$$
		Thus, $A\bt B$ is permutation equivalent to $B\otimes A$.
		Similarly, we can check that $A\bt B$ is also permutation equivalent to $A\otimes B$.
	\end{remark}
	
	Next, we define the following $s+1$ \emph{kernel maps}, which will play a central role in the design of parity check submatrices $H_0, H_1, \dots, H_{n-1}$ in \eqref{eq:parity_check}.
	For $b\in [s+1]$ and positive integer $t$,
	we define
	\begin{equation*}
		\cK_{b}^{(t)} : \Fq^{s} \to \Fq^{st\times s}
	\end{equation*}
	which maps a vector $x_{[s]}$ to the following $st\times s$ matrix $\cK_{b}^{(t)}(x_{[s]})$ over $\Fq$:
	We regard $\cK_{b}^{(t)}(x_{[s]})$ as an $s\times s$ block matrix, whose block entries are column vectors of length $t$,
	for $b\in[s+1]$ and $i,j\in [s]$, the block entry appearing at the $i$th block row and the $j$th (block) column is
	\begin{equation}\label{eq:cK}
		\cK_b^{(t)}(x_{[s]})(i,j) =
		\begin{cases}
			\wn L^{(t)}(x_j) & \text{ if } i=j,   \\[.3em]
			-L^{(t)}(x_j)    & \text{ if } i=b,   \\[.3em]
			\wn \bzero       & \text{ otherwise}.
		\end{cases}
	\end{equation}
	As the block row index $i$ is always less than $s$, for the case of $b=s$, indeed we have
	\begin{equation*}
		\cK_s^{(t)}(x_{[s]})(i,j) =
		\begin{cases}
			\wn L^{(t)}(x_j) & \text{ if } i=j,   \\[.3em]
			\wn \bzero       & \text{ otherwise}.
		\end{cases}
	\end{equation*}
	In the case of $s=3$, we have

	\begin{align*}
		\cK_0^{(t)}(x_{[3]})= &
		\begin{bmatrix}
			\wn L_{0}^{(t)} & -L_{1}^{(t)}    & -L_{2}^{(t)}    \\[.3em]
			\wn \gbzero     & \wn L_{1}^{(t)} & \wn \gbzero     \\[.3em]
			\wn \gbzero     & \wn \gbzero     & \wn L_{2}^{(t)}
		\end{bmatrix}, \\
		\cK_1^{(t)}(x_{[3]})= &
		\begin{bmatrix}
			\wn L_{0}^{(t)} & \wn \gbzero     & \wn \gbzero     \\[.3em]
			-L_{0}^{(t)}    & \wn L_{1}^{(t)} & -L_{2}^{(t)}    \\[.3em]
			\wn \gbzero     & \wn \gbzero     & \wn L_{2}^{(t)}
		\end{bmatrix}, \\
		\cK_2^{(t)}(x_{[3]})= &
		\begin{bmatrix}
			\wn L_{0}^{(t)} & \wn \gbzero     & \wn \gbzero     \\[.3em]
			\wn \gbzero     & \wn L_{1}^{(t)} & \wn \gbzero     \\[.3em]
			-L_{0}^{(t)}    & -L_{1}^{(t)}    & \wn L_{2}^{(t)}
		\end{bmatrix}, \\
		\cK_3^{(t)}(x_{[3]})= &
		\begin{bmatrix}
			\wn L_{0}^{(t)} & \wn \gbzero     & \wn \gbzero     \\[.3em]
			\wn \gbzero     & \wn L_{1}^{(t)} & \wn \gbzero     \\[.3em]
			\wn \gbzero     & \wn \gbzero     & \wn L_{2}^{(t)}
		\end{bmatrix},
	\end{align*}~where $L_j^{(t)} = L^{(t)}(x_j)$.
	
	\begin{remark} \textbf{1)} The block entry of $\cK_{b}^{(t)}(x_{[s]})$ is nonzero if and only if it lies on the main block diagonal or the $b$th block row,
		and the last block matrix  $\cK_{s}^{(t)}(x_{[s]})$ is block diagonal.
		\textbf{2)} In our construction of optimal-access MSR codes we only use the first $s$ kernel maps, while in the second construction of (non-optimal-access) MSR codes we will employ all $s+1$ kernel maps.
		\textbf{3)} Negative signs in \eqref{eq:cK} before off-block-diagonal entries are optional for the construction of optimal-access MSR codes in Section~\ref{sect:code1}.
		In other words, removing these negative signs does not affect any property of the codes.
		However, all negative signs in \eqref{eq:cK} before off-block-diagonal entries are required in Section~\ref{sect:code2}.
		Just for consistency, we keep the negative signs in \eqref{eq:cK} for both constructions.
	\end{remark}
	
	Furthermore, for $a\in [\bar n]$, $b\in [s+1]$ and a positive integer $t$, we define the following map
	\begin{equation*}
		\cM_{a,b}^{(t)} : \Fq^{s} \to \Fq^{\ell t\times \ell},
	\end{equation*}
	which maps $x_{[s]}$ to an $\ell t \times \ell$ matrix
	\begin{equation}\label{eq:cM}
		\cM_{a,b}^{(t)}(x_{[s]}) = \bI_{\ell/s^{a+1}}\otimes \left(\bI_{s^a}\bt \cK_b^{(t)}(x_{[s]})\right).
	\end{equation}
	Recall that we treat $\cK_{b}^{(t)}(x_{[s]})$ as an $s\times s$ block matrix, whose block entries are column vectors of length $t$.
	Unless otherwise specified, we always use $\cM_{a,b}^{(t)}$ in place of $\cM_{a,b}^{(t)}(x_{[s]})$.
	
	The following lemma shows that for distinct $g,h\in [\bar n]$, the matrices $\cM_{g,b}^{(t)}$ and
	$\cM_{h,b}^{(t)}$ essentially have the same structure.
	We provide its proof in Appendix~\ref{sect:proof_permu}.
	
	\begin{lemma}\label{lemma:permutation}
		For any $g,h\in [\bar n]$,
		there exist an $\ell\times \ell$ permutation matrix $\bP_{g,h}$ such that:
		\begin{itemize}
			\item [(i)] $\bP_{g,h} = \bP_{h,g}$ and $\bP_{g,h} \bP_{g,h} = \bI_{\ell}$.
			\item [(ii)] For each $a\in [\bar n]\setminus \{g,h\}$ and each $b\in [s+1]$,
			\begin{align*}
				(\bP_{g,h}\otimes \bI_t) \cM_{a,b}^{(t)} \bP_{g,h} = \cM_{a,b}^{(t)}.
			\end{align*}
			\item [(iii)] For each $b\in [s+1]$,
			\begin{align*}
				& (\bP_{g,h}\otimes \bI_t) \cM_{g,b}^{(t)} \bP_{g,h} = \cM_{h,b}^{(t)}, \\
				& (\bP_{g,h}\otimes \bI_t) \cM_{h,b}^{(t)} \bP_{g,h} = \cM_{g,b}^{(t)}.
			\end{align*}
		\end{itemize}
	\end{lemma}
	
	For a non-empty subset $B = \{b_0, b_1, \dots, b_{t-1}\}\subseteq[s+1]$ of size $t$, where $b_0 < b_1 < \dots < b_{t-1}$, a length-$st$ vector $x_{[st]}$, and a positive integer $m\ge t$, we define the $sm\times st$ matrix
	\begin{equation}\label{eq:cKB}
		\cK_B^{(m)}(x_{[st]}) = \left[ \cK_{b_0}^{(m)}(x_{[s]})\smat \cdots \smat \cK_{b_{t-1}}^{(m)}(x_{(t-1)s+[s]})  \right],
	\end{equation}
	and for $a\in [\bar n]$, we define the $\ell m \times \ell t$ matrix
	\begin{equation*}
		\cM_{a,B}^{(m)}(x_{[st]}) = \left[ \cM_{a,b_0}^{(m)}(x_{[s]})\smat \cdots \smat \cM_{a,b_{t-1}}^{(m)}(x_{(t-1)s+[s]})  \right].
	\end{equation*}
	Similarly, unless otherwise specified, we always use $\cK_{B}^{(m)}$ in place of $\cK_{B}^{(m)}(x_{[st]})$ and $\cM_{a,B}^{(m)}$ in place of $\cM_{a,B}^{(m)}(x_{[st]})$. 
	
	\begin{remark}\label{remark:M_equiv}
		By Lemma~\ref{lemma:permutation}, there exists a permutation matrix $\bP_{a,0}$ such that
		\begin{align*}
			(\bP_{a,0}\otimes \bI_{m}) 	\cM_{a,B}^{(m)} (\bI_{t}\otimes \bP_{a,0}) =\bI_{\ell/s}\bt \cK_B^{(m)},
		\end{align*}
		where $\cK_{B}^{(m)}$ is a $1\times t$ block matrix written as in \eqref{eq:cKB}.
		According to Remark~\ref{remark:block_tensor}, we further know that $\bI_{\ell/s}\bt \cK_B^{(m)}$ is permutation equivalent to $\bI_{\ell/s}\otimes \cK_B^{(m)}$.
		In conclusion, we know that $\cM_{a,B}^{(m)}$ is permutation equivalent to $\bI_{\ell/s}\otimes \cK_B^{(m)}$.
	\end{remark}
	
	\begin{corollary}\label{corollary:invert}
		Let $t=|B|$. The matrix $\cK_B^{(t)}$ is invertible if and only if $\cM_{a,B}^{(t)}$ is invertible.
	\end{corollary}
	\begin{proof}
		From Remark~\ref{remark:M_equiv}, for $m=t$ we have
		$$\det(\cM_{a,B}^{(t)}) = c\cdot \det(\bI_{\ell/s}\otimes \cK_B^{(t)}) = c\cdot \det(\cK_B^{(t)})^{l/s}$$
		where $c\in\{-1,1\}$, and this concludes the proof.
	\end{proof}
	
	Readers may opt to skip following lemmas initially; we will make reference to them once the reader arrives at the next section.
	Lemmas~\ref{lemma:UKB}-\ref{lemma:VMaB} will be used to prove the MDS property, and Lemmas~\ref{lemma:repairA}-\ref{lemma:repairB} will be used
	to show the repair scheme.
	
	\begin{lemma}\label{lemma:UKB}
		Suppose that $B\subseteq[s+1]$ is a non-empty set of size $t$.
		If $x_{[st]}$ are $st$ distinct elements such that $\cK_B^{(t)}$ is invertible, then for any integer $m > t$, 
		there exists an $sm\times sm$ matrix $U$ such that:
		\begin{itemize}
			\item [(i)]
			\begin{align}
				U \cK_B^{(m)} = \left[\begin{array}{c}
					\cK_B^{(t)} \\
					\bO
				\end{array}\right]\label{eq:cancel_K}
			\end{align}
			where $\bO$ is the $s(m-t)\times st$ all-zero matrix.
			\item [(ii)]
			For any other field element $x\notin x_{[st]}$,
			\begin{align}
				U(\bI_{s}\otimes L^{(m)}(x)) = \left[\begin{array}{l}
					~\bI_{s}\otimes L^{(t)}(x) \\
					(\bI_{s}\otimes L^{(m-t)}(x))F(x)
				\end{array}\right]\label{eq:cancel_diag_L}
			\end{align}
			where $F(x)$ is an $s\times s$ invertible matrix.
		\end{itemize}
	\end{lemma}
	
	Employing the fact that $\cM_{a,B}^{(m)}$ is permutation equivalent to $\bI_{\ell/s}\otimes \cK_B^{(m)}$ of Remark~\ref{remark:M_equiv},
	we can generalize Lemma~\ref{lemma:UKB} to the following.
	\begin{lemma}\label{lemma:VMaB}
		Suppose that $a\in [\bar n]$, $B\subseteq[s+1]$ is a non-empty set of size $t$.
		If $x_{[st]}$ are $st$ distinct elements such that $\cM_{a,B}^{(t)}$ is invertible, then for any integer $m>t$, there exists an $\ell m\times \ell m$ matrix $V$ such that:
		\begin{itemize}
			\item [(i)]
			\begin{align*}
				V \cM_{a,B}^{(m)} = \left[\begin{array}{c}
					\cM_{a,B}^{(t)} \\
					\bO
				\end{array}\right]
			\end{align*}
			where $\bO$ is the $\ell(m-t)\times \ell t$ all-zero matrix.
			\item [(ii)] For any $g\in [\bar n]\setminus \{a\}, h\in [s+1]$, and $s$ elements $y_{[s]}$ having no common elements with $x_{[st]}$,
			\begin{align*}
				V \cM_{g,h}^{(m)}(y_{[s]}) = \left[\begin{array}{l}
					\cM_{g,h}^{(t)}(y_{[s]}) \\
					\widehat \cM_{g,h}^{(m-t)}(y_{[s]})
				\end{array}\right]
			\end{align*}
			where $\widehat{\cM}_{g,h}^{(m-t)}(y_{[s]})$ is an $\ell(m-t)\times \ell $ matrix which is column equivalent to ${\cM}_{g,h}^{(m-t)}(y_{[s]})$.
			\item [(iii)]
			For any $x\notin x_{[st]}$,
			\begin{align*}
				V(\bI_{\ell}\otimes L^{(m)}(x)) = \left[\begin{array}{l}
					~\bI_{\ell}\otimes L^{(t)}(x) \\
					(\bI_{\ell}\otimes L^{(m-t)}(x))Q
				\end{array}\right]
			\end{align*}
			where $Q$ is an $\ell\times \ell$ invertible matrix.
		\end{itemize}
	\end{lemma}
	
	The proofs of these two lemmas are left in Appendix~\ref{sect:proof_UKB} and Appendix~\ref{sect:proof_VMaB} respectively.
	
	The following are some definitions and notations which will be used in the repair of single node failure.
	For any $b\in [s]$, we use $\bm e_{b}$ to denote the $b$th row of $\bI_s$.
	In other words, we have
	\begin{equation*}
		\bI_s = \left[\begin{array}{l}
			\bm e_0 \\
			\bm e_1 \\
			\vdots  \\
			\bm e_{s-1}
		\end{array}\right].
	\end{equation*}
	Let $\bar \ell = \ell / s = s^{\bar n -1}$.
	Then, for any $a\in [\bar n]$ and $b\in [s]$, we define an $\bar \ell \times \ell$ matrix
	\begin{equation}\label{eq:Rab}
		R_{a,b} = \bI_{\ell/s^{a+1}}\otimes\bm e_b \otimes \bI_{s^a}.
	\end{equation}
	If we take $\bm e_b$ as a $1\times s$ block matrix, then $$R_{a,b} = \bI_{\ell/s^{a+1}}\otimes(\bI_{s^a} \bt \bm e_b).$$
	It is easy to verify that
	\begin{equation*}
		\sum_{z\in[s]}R_{a,z}^T R_{a,z} = \bI_{\ell}.
	\end{equation*}
	
	For any index $i\in [\ell]$, we can write it as
	\begin{equation*}
		i = \sum_{z\in [\bar n]}i_z\cdot s^z = (i_{\bar n-1}, \dots, i_1, i_0),
	\end{equation*}
	where each $i_z\in [s]$.
	If $M$ is an $\ell \times \ell$ matrix, then one can check that $R_{a,b} M$ is the submatrix obtained by extracting rows from $M$ where the row index $i$ satisfies $i_a=b$. Similarly, $M R_{a,z}^T$ is the submatrix formed by extracting columns from $M$ where the column index $j$ satisfies $j_a=z$.
	
	Moreover, we use $\bm 1$ to denote the all $1$ row vector of length $s$ over $\Fq$.
	For all $a\in [\bar n]$, let
	\begin{equation*}
		R_{a,s} = \bI_{\ell/s^{a+1}}\otimes \bm 1 \otimes \bI_{s^a}.
	\end{equation*}
	Then we have $R_{a,s} = \sum_{z\in [s]}R_{a,z}$.
	
	
	For $a\in [\bar n-1], b\in [s+1]$ and a positive integer $t$, we define the following matrix
	$$
	\bar{\cM}_{a,b}^{(t)}(x_{[s]}) =  \bI_{\bar \ell /s^{a+1}}\otimes \left(\bI_{s^a}\bt \cK_b^{(t)}(x_{[s]})\right).
	$$
	Similarly, we see $\cK_{b}^{(t)}(x_{[s]})$ as an $s\times s$ block matrix, whose block entries are column vectors of length $t$.
	
	These two lemmas will be employed in the repair process of our MSR codes.
	\begin{lemma}\label{lemma:repairA}
		For any $a\in [\bar n], b,z\in[s]$, and any positive integer $t$, we have the following property.
		\begin{itemize}
			\item [(i)]
			\begin{align*}
				(R_{a,b}\otimes\bI_t)\cM_{a,b}^{(t)} R_{a,z}^{T} = \begin{cases}
					\wn \bI_{\bar{\ell}}\otimes L^{(t)}(x_z) & \text{if~} z=b,     \\
					-\bI_{\bar{\ell}}\otimes L^{(t)}(x_z)    & \text{if~} z\neq b.
				\end{cases}
			\end{align*}
			\item [(ii)] For any $h\in [s+1]\setminus\{b\}$
			\begin{align*}
				(R_{a,b}\otimes\bI_t)\cM_{a,h}^{(t)} R_{a,z}^{T} =  \begin{cases}
					\bI_{\bar{\ell}}\otimes L^{(t)}(x_b), & \text{if~} z=b,     \\
					\bO                                   & \text{if~} z\neq b.
				\end{cases}
			\end{align*}
			\item[(iii)] For any $g\in [\bar n]\setminus\{a\}$ and $h\in [s+1]$,
			\begin{align*}
				(R_{a,b}\otimes\bI_t)\cM_{g,h}^{(t)}  R_{a,z}^{T} =  \begin{cases}
					\bar \cM_{\bar g, h}^{(t)}, & \text{if~} z=b,     \\
					\bO                         & \text{if~} z\neq b,
				\end{cases}
			\end{align*}
			where
			$$
			\bar g = \begin{cases}
				g   & \text{if~} g<a, \\
				g-1 & \text{if~} g>a.
			\end{cases}
			$$
		\end{itemize}
	\end{lemma}
	The proof of Lemma~\ref{lemma:repairA} is left in Appendix~\ref{sect:proof_repairA}.
	
	\begin{lemma}\label{lemma:repairB}
		For any $a\in [\bar n], z\in [s]$, and any positive integer $t$, we have the following property.
		\begin{itemize}
			\item [(i)]
			$$
			(R_{a,s}\otimes\bI_{t})\cM_{a,s}^{(t)}R_{a,z}^T = \bI_{\bar{\ell}}\otimes L^{(t)}(x_z).
			$$
			\item [(ii)] For any $b\in [s]$,
			$$
			(R_{a,s}\otimes\bI_{t})\cM_{a,b}^{(t)}R_{a,z}^T = \begin{cases}
				\bI_{\bar{\ell}}\otimes L^{(t)}(x_b) & \text{if~} z=b,     \\
				\bO                                  & \text{if~} z\neq b.
			\end{cases}
			$$
			\item [(iii)] For any $g\in [\bar n]\setminus\{a\}, h\in [s+1]$,
			\begin{align*}
				(R_{a,s}\otimes\bI_t)\cM_{g,h}^{(t)}R_{a,z}^{T} = \bar{M}_{\bar g,h}^{(t)},
			\end{align*}
			where $\bar g$ is defined the same as that in Lemma~\ref{lemma:repairA}.
		\end{itemize}
	\end{lemma}
 \begin{proof}	
	Since $R_{a,s} = \sum_{b\in [s]}R_{a,b}$, for any $g\in [\bar n], h\in [s+1], z\in [s]$, we have
	\begin{align*}
		(R_{a,s}\otimes\bI_t)\cM_{g,h}^{(t)} R_{a,z}^T = \sum_{b\in [s]}[(R_{a,b}\otimes\bI_t)\cM_{g,h}^{(t)} R_{a,z}^T].
	\end{align*}
	Then the results follows directly from Lemma~\ref{lemma:repairA}, and we
	omit the details. 
 \end{proof}

	\section{Optimal-access MSR code with sub-packetization \texorpdfstring{$\ell = s^{n/s}$}{}}\label{sect:code1}
	In this section, we set $\bar n = \lceil n/s \rceil $, and set the sub-packetization level $\ell = s^{\bar n}$.
	As discussed in the bottom of Section~\ref{sect:intro}, without loss of generality, we can assume that $s$ divides $n$, and thus $\bar n = n/s$.
	The codeword $(C_0, C_1, \cdots C_{n-1})$ of the $(n,k,\ell)$ array code is divided into $\bar n$ groups, each of size $s$.
	We use $a\in [\bar n], b\in [s]$ to represent the index of the group and the index of the node within its group, respectively.
	In other words, group $a$ consists of the $s$ nodes $(C_{as+b}: b\in [s])$.
	
	\subsection{Construction}\label{sect:cons1}
	
	To begin with, we select $ns$ distinct elements $\lambda_{0}$, $\lambda_1$, $\dots, \lambda_{ns-1}$ from $\Fq$.
	Next, for $a\in [\bar n], b\in [s]$ and positive integer $t$ we define the \emph{kernel matrix} $K_{a,b}^{(t)}$ by assigning these elements to the map $\cK_{b}^{(t)}$ defined in \eqref{eq:cK} in the following way:
	\begin{equation}\label{eq:Kab}
		K_{a,b}^{(t)} = \cK_b^{(t)}(\lambda_{as^2+bs+[s]}).
	\end{equation}
	Obviously, for distinct values of $g,h\in [\bar n]$, the kernel matrices $K_{g,b}^{(t)}$ and $K_{h,b}^{(t)}$ exhibit identical structures as $\cK_b^{(t)}$.
	
	Following that, for $a\in[\bar n]$, $b\in [s]$ and positive integer $t$, let
	\begin{equation}\label{eq:Mab}
		M_{a,b}^{(t)} = \cM_{a,b}^{(t)}(\lambda_{as^2+bs+[s]}). 
	\end{equation}
	Moreover, for  $a\in [\bar n]$, a non-empty subset $$B = \{b_0, b_1, \dots, b_{t-1}\}\subseteq [s]$$ where $b_0<b_1< \cdots < b_{t-1}$, and positive integer $m\ge t$,
	we define the following $\ell m\times \ell t$ matrix:
	\begin{equation}\label{eq:MaB}
		M_{a, B}^{(m)}= [M_{a, b_0}^{(m)} \smat M_{a, b_1}^{(m)} \smat \cdots \smat M_{a, b_{t-1}}^{(m)}].
	\end{equation}
	If $m=t$, then $M_{a, B}^{(m)}$ is an $\ell m\times \ell m$ square matrix, and
	we write $M_{a, B}  = M_{a, B}^{(m)}$ for convenience.
	
	To guarantee the MDS property, we not only require the $ns$ elements $\lambda_{[ns]}$ to be distinct,
	but also need that
	\begin{equation}\label{eq:localAM}
		\det(M_{a, B})\neq 0,~ a\in [\bar n],~ \emptyset\neq B \subseteq [s].
	\end{equation}
	i.e., all the square matrices $M_{a,B}$ are invertible.
	We will refer to these conditions in~\eqref{eq:localAM} as \emph{local (or in-group) constraints}.
	
	The existence of such $ns$ distinct elements $\lambda_{[ns]}$ satisfy \eqref{eq:localAM} in a field of size $O_{s}(n)$ is guaranteed by the following result.
	\begin{lemma}\label{lemma:FieldSizeA}
		If $q \ge ns + (s-1) 2^{s-2}$, then we can find $ns$ distinct elements $\lambda_0,\dots, \lambda_{ns-1}$ from $\Fq$ satisfying \eqref{eq:localAM}.
		Moreover, if $n$ is large enough compared with $s$, then we can find such elements in $\Fq$ of order $q\ge ns+1$.
	\end{lemma}
	The proof of this lemma is provided in Appendix~\ref{sect:prooflemma1}.
	
	Finally, we define the parity check submatrices for our $(n,k,\ell = s^{n/s})$ MSR codes.
	For each $a\in [\bar n]$ and $b\in [s]$, the $\ell r\times \ell$ matrix $H_{as+b}$ in \eqref{eq:parity_check} is defined as
	\begin{equation}\label{eq:pcmA}
		H_{as+b} = M_{a,b}^{(r)}.
	\end{equation}
	
	\subsection{MDS property: reduction from global constraints to local constraints}\label{sect:MDS}
	
	Recall that an $(n,k,\ell)$ array code is an MDS array code if it allows us to recover any $r=n-k$ node erasures from
	the remaining $k$ nodes.
	We write the index set of failed nodes as $\cF = \{i_1, i_2, \dots, i_r\}$.
	Then the parity check equation~\eqref{eq:parity_check} can be rewritten as
	\begin{equation*}
		\sum_{i\in\cF}H_iC_i = -\sum_{i\notin\cF}H_iC_i.
	\end{equation*}
	In order to recover the failed nodes $\{C_i, i\in\cF\}$, we require the above equation has a unique solution.
	Thus the MDS property of our new codes constructed in Section~\ref{sect:code1} follows directly from the following arguments:
	\begin{equation}\label{eq:MDS}
		\begin{aligned}
			\left[H_{i_1} \smat  H_{i_2} \smat \cdots \smat H_{i_r}\right] \text{~is invertible} \\ \text{for~all~} \cF = \{i_1, i_2, \dots, i_r\}\subseteq [n].
		\end{aligned}
	\end{equation}
	
	Now we give our main result by applying Lemma~\ref{lemma:VMaB} recursively to make the matrices in global constraints~\eqref{eq:MDS} block upper triangular.
	The MDS property of our MSR codes defined in the previous subsection will follow directly from
	the following.
	\begin{lemma}\label{lemma:globalA}
		For any $z$ distinct integers $a_0, a_1, \dots, a_{z-1}\in [\bar n]$ and any $z$ non-empty subsets $B_0,B_1,\dots, B_{z-1}\subseteq [s]$ satisfying
		$|B_0| + |B_1| + \cdots + |B_{z-1}| = m \le r$, we have
		\begin{equation*}
			\det\left([ M_{a_0, B_0}^{(m)}\smat M_{a_1, B_1}^{(m)} \smat \cdots  \smat M_{a_{z-1}, B_{z-1}}^{(m)} ]\right)\neq0.
		\end{equation*}
	\end{lemma}
	\begin{proof}
		We prove it by induction on the positive integer $z$.
		If $z = 1$, since all the $\lambda_i$s satisfying local constraints~\eqref{eq:localAM}, we have $\det(M_{a_0, B_0})\neq0$.
		
		For the inductive hypothesis, we assume that the conclusion holds for an arbitrary positive integer $z$.
		
		For the case of $z+1$, we have $|B_0|+|B_1|+\cdots +|B_z| = m$. We write $t = |B_0|$. As $M_{a_0, B_0}$ is invertible and all the $ns$ elements $\lambda_{ns}$ are distinct, according to Lemma~\ref{lemma:VMaB}, there exist $\ell m\times \ell m$ matrix $V$ such that
		\begin{align}\label{eq:det}
			\begin{split}
				&\det(V [M_{a_0, B_0}^{(m)} \smat M_{a_1, B_1}^{(m)} \smat \cdots \smat M_{a_{z},B_{z}}^{(m)} ])\\[.5em]
				=&
				\det\left[\begin{array}{c c c c}
					M_{a_0,B_0}^{(t)} & M_{a_1,B_1}^{(t)}            & \cdots & M_{a_z,B_z}^{(t)}            \\[.5em]
					\bO               & \widehat M_{a_1,B_1}^{(m-t)} & \cdots & \widehat M_{a_z,B_z}^{(m-t)}
				\end{array} \hspace*{-0.05in}\right]\\[.5em]
				=& \det\left( M^{(t)}_{a_0, B_0}\right) \det\left([\widehat M_{a_1,B_1}^{(m-t)}\smat  \cdots\smat  \widehat M_{a_z,B_z}^{(m-t)}]\right)
			\end{split}
		\end{align}
		and each $\widehat M_{a_i,B_i}^{(m-t)}$ is column equivalent to $M_{a_i,B_i}^{(m-t)}$.
		Note that $|B_1|+\cdots +|B_z| = m-t$.
		By the induction hypothesis, the matrix  $$[M_{a_1,B_1}^{(m-t)} \smat M_{a_2,B_2}^{(m-t)} \smat  \cdots\smat  M_{a_z,B_z}^{(m-t)}]$$ is invertible.
		It means that the matrix $$[\widehat M_{a_1,B_1}^{(m-t)}\smat  \widehat M_{a_2,B_2}^{(m-t)} \smat  \cdots\smat  \widehat M_{a_z,B_z}^{(m-t)}]$$ is also invertible.
		By~\eqref{eq:det}, we know $$\det([ M_{a_0, B_0}^{(m)}\smat M_{a_1, B_1}^{(m)}\smat \dots \smat M_{a_{z}, B_{z}}^{(m)} ])\neq 0.$$
	\end{proof}
	
	Note that if we set  $m = r$ in Lemma~\ref{lemma:globalA}, then we obtain global constraints \eqref{eq:MDS} directly.

	\subsection{Repair scheme of \texorpdfstring{$(n,k,s^{n/s})$}{} MSR code}\label{sect:repair1}
	
	We denote the index set of helper nodes as $\cH$.
	For every $a\in [\bar n]$ and $b\in [s]$, the node $C_{as+b}$ can be recovered from $$\{R_{a,b}C_{j}:j\in \cH\}$$ for any set
	$\cH\subseteq[n]\setminus\{as+b\}$ with size $|\cH| = d$.
	Where $R_{a,b}$ is defined in \eqref{eq:Rab}.
	
	Since each row of the matrix $R_{a,b}$ has only one nonzero element and all the helper nodes use the same repair matrix $R_{a,b}$ to download helper symbols, this repair procedure has both the optimal access property and constant repair property.
	
	{\bf How to repair $C_{as+b}$,  $a\in [\bar n],b\in [s]$.}
	Recall that the parity check equation of the code $C$ is
	$$
	H_0C_0 + \dots + H_1C_1 = \bzero.
	$$
	Each $H_i$ in \eqref{eq:parity_check} has $\ell$ block rows, and each block row includes $r$ rows.
	In other words, we have $\ell$ sets of parity check equations and each set of them has $r$ parity check equations.
	The repair of any failed node $C_{as+b}$ only involves $\bar\ell = \ell/s$ out of these $\ell$ sets of parity check equations.
	To be specific, we use the repair matrix $R_{a,b}$ to choose the $\bar{\ell}$ block rows of each $H_i$ whose block row indices $i$ satisfy $i_a=b$.
	More precisely, we have
	\begin{align*}
		& (R_{a,b}\otimes \bI_{r})(\sum_{i\in [n]}H_iC_i)\nonumber                                               \\
		= & \sum_{i\in [n]} (R_{a,b}\otimes\bI_{r}) H_i(\sum_{z\in [s]}R_{a,z}^T R_{a,z})C_i\nonumber              \\
		= & \sum_{i\in [n]}\sum_{z\in [s]} \big[(R_{a,b}\otimes\bI_{r}) H_i  R_{a,z}^T\big](R_{a,z}C_i)\nonumber   \\
		= & \sum_{i\in [n]\setminus\{as+b\}} \big[(R_{a,b}\otimes\bI_{r}) H_i  R_{a,b}^T\big](R_{a,b}C_i)\nonumber \\ &+ \sum_{z\in [s]} \big[(R_{a,b}\otimes\bI_{r}) H_{as+b}  R_{a,z}^T\big](R_{a,z}C_{as+b}) \\=& \bzero \nonumber
	\end{align*}
	from \eqref{eq:parity_check}, \eqref{eq:cK}, \eqref{eq:cM}, \eqref{eq:Mab}, \eqref{eq:pcmA} and Lemma~\ref{lemma:repairA}.
	For any $i\in [n]\setminus\{as+b\}$, we set
	$$\bar H_i = (R_{a,b}\otimes\bI_{r}) H_i R_{a,b}^T,\quad \bar C_i = R_{a,b}C_i,$$
	and for any $z\in [s]$, we set
	$$\bar H_{as+b}^{\left\langle z\right\rangle} = (R_{a,b}\otimes\bI_{r}) H_{as+b} R_{a,z}^T,~~ \bar C_{as+b}^{\left\langle z\right\rangle} = R_{a,z}C_{as+b}.$$
	Then the repair equation of the failed node $C_{as+b}$ can be written as
	\begin{equation}\label{eq:pcrA}
		\sum_{i\in[n]\setminus\{as+b\}}\bar H_i\bar C_i + \sum_{z\in[s]}\bar{H}_{as+b}^{\left\langle z\right\rangle}\bar{C}_{as+b}^{\left\langle z\right\rangle}  =\bzero.
	\end{equation}
	By Lemma~\ref{lemma:repairA}, we know that all these $n-1+s$ matrices $\bar{H}_i, i\in [n]\setminus \{as+b\}$ and $\bar H_{as+b}^{\left\langle z\right\rangle }, z\in [s]$ are $\bar{\ell}\times \bar{\ell}$ block matrices whose entries are column vectors of length $r$,
	and $\bar{C}_i, i\in [n]\setminus \{as+b\}$, $C_{as+b}^{\left\langle z\right\rangle }, z\in [s]$ are length-$\bar{\ell}$ column vectors.
	
	To be specific, let $L_i^{(r)} = L^{(r)}(\lambda_i)$, we have
	\begin{itemize}
		\item[(i)] for $z\in [s]$,
		\begin{align*}
			\bar H_{as+b}^{ \left\langle z\right\rangle } =  \begin{cases}
				\wn \bI_{\bar{\ell}}\otimes L_{as^2+bs+b}^{(r)} & \text{if~} z=b,       \\[.3em]
				-\bI_{\bar{\ell}}\otimes L_{as^2+bs+z}^{(r)}    & \text{if~} z\neq  b ;
			\end{cases}
		\end{align*}
		\item[(ii)] for $h\in [s]\setminus\{b\}$,
		\begin{equation*}
			\bar H_{as+h} =  \bI_{\bar{\ell}}\otimes L_{as^2+hs+b}^{(r)};
		\end{equation*}
		\item[(iii)] and for $g\in [\bar n]\setminus\{a\}$, $h\in [s]$,
		\begin{align*}
			\bar{H}_{gs+h} = \bar \cM_{\bar g,h}^{(r)}(\lambda_{gs^2+hs+[s]}),
		\end{align*}
		where $\bar g$ is defined as in Lemma~\ref{lemma:repairA}.
	\end{itemize}
	
	We observe that \eqref{eq:pcrA} define an array code of length $n-1+s$.
	The nodes of this new array code can also be divided into $\bar n$ groups as follows.
	For $g\in [\bar n]\setminus\{a\}$, the group $g$ consists of the $s$ nodes $(\bar C_{gs+h}, h\in [s])$.
	The group $a$ consists of the $2s-1$ nodes $\bar C_{as+b}^{\left\langle z\right\rangle }, z\in [s]$, and $\bar C_{as+h}, h\in [s]\setminus\{b\}$.
	Moreover, the parity check submatrices $\bar{H}_{gs+h}$, $g\in [\bar n]\setminus\{a\}, h\in [s]$ are precisely the $n-s$ parity check submatrices that would appear in the MSR code construction with code length $n-s$ and sub-packetization $\bar{\ell}$.
	The parity check submatrices of group $a$, $\bar{H}_{as+b}^{\left\langle z\right\rangle }, z\in [s]$ and $\bar{H}_{as+h}, h\in [s]\setminus\{b\}$ are block-diagonal matrices,
	and the diagonal entries are the same within each matrix.
	Since the $\lambda_i$'s that appear in $\bar{H}_{as+b}^{\left\langle z\right\rangle }, z\in [s]$ and $\bar{H}_{as+h}, h\in [s]\setminus\{b\}$
	are distinct from the $\lambda_i$'s that appear in  $\bar{H}_{gs+h}, g\in [\bar n]\setminus\{as+b\}, h\in [s]$.
	The method we used to prove the MDS property of $(n,k,\ell = s^{n/s})$ MSR code construction in Section~\ref{sect:MDS}, together with Lemma~\ref{lemma:VMaB}~(iii), can be generalized to show that \eqref{eq:pcrA} also
	defines an $(n+s-1,k+s-1,\bar \ell)$ MDS array code
	$$(\bar{C}_0,\dots, \bar{C}_{as+b-1}, \bar{C}_{as+b}^{\left\langle 0\right\rangle }, \dots, \bar{C}_{as+b}^{\left\langle s-1\right\rangle },  \bar{C}_{as+b+1},\dots, \bar{C}_{n-1}).$$
	Therefore, $(\bar{C}_{as+b}^{\left\langle z\right\rangle }, z\in [s])$ can be recovered from any $d = k+s-1$ vectors in the set $(\bar{C}_i, i\in[n]\setminus\{as+b\})$.
	When we know the values of $(\bar{C}_{as+b}^{\left\langle z\right\rangle }, z\in [s])$, we are able to recover all the coordinates of $C_{as+b}$.
	
	\subsection{A toy example}
	
	In this subsection, we give an example of $(n$, $k$, $\ell = s^{n/s})$ MSR code with parameters $n = 6, k = 2$ and $d = 4$ over $\ff_{32}$.
	Then $r = n-k = 4, s = d-k+1 = 3$ and $\ell =s^{n/s} = 9$.
	Let $\alpha$ be a root of the primitive polynomial $x^5 + x^2 +1$ over $\ff_{32}$, and we set $\lambda_i=\alpha^{i}$ for $i\in [18]$,
	and
	\begin{equation*}
		L_i=L^{(4)}(\lambda_i)  = \left[
		\begin{array}{l}
			1           \\
			\alpha^i    \\
			\alpha^{2i} \\
			\alpha^{3i} \\
		\end{array}
		\right].
	\end{equation*}
	For any $a\in [2], b\in [3]$ and positive integer $t$,
	we set $K_{a,b}^{(t)}$ and $M_{a,b}^{(t)}$ as in \eqref{eq:Kab} and \eqref{eq:Mab}, respectively.
	According to \eqref{eq:pcmA}, the parity check submatrices $H_{3a+b} = M_{a,b}^{(4)}$.
	It can be verified that these elements $\lambda_{[18]}$ satisfy the local constraints \eqref{eq:localAM}.
	To explicitly present the code construction, the sequence of parity check submatrices $H_0, H_1, \dots, H_{5}$ are listed as follows:
	{\scriptsize
		\begin{align*}
			& H_0 =
			\left[\begin{array}{@{\hspace*{-0.02in}}*{2}{*{3}{@{\hspace*{-0.02in}}c}|}*{3}{@{\hspace*{-0.02in}}c}@{\hspace*{-0.02in}}}
				\rL{0}  & \rL[-]{1} & \rL[-]{2} & \gbzero & \gbzero   & \gbzero   & \gbzero & \gbzero   & \gbzero   \\
				\gbzero & \rL{1}    & \gbzero   & \gbzero & \gbzero   & \gbzero   & \gbzero & \gbzero   & \gbzero   \\
				\gbzero & \gbzero   & \rL{2}    & \gbzero & \gbzero   & \gbzero   & \gbzero & \gbzero   & \gbzero   \\
				\hline
				\gbzero & \gbzero   & \gbzero   & \rL{0}  & \rL[-]{1} & \rL[-]{2} & \gbzero & \gbzero   & \gbzero   \\
				\gbzero & \gbzero   & \gbzero   & \gbzero & \rL{1}    & \gbzero   & \gbzero & \gbzero   & \gbzero   \\
				\gbzero & \gbzero   & \gbzero   & \gbzero & \gbzero   & \rL{2}    & \gbzero & \gbzero   & \gbzero   \\
				\hline
				\gbzero & \gbzero   & \gbzero   & \gbzero & \gbzero   & \gbzero   & \rL{0}  & \rL[-]{1} & \rL[-]{2} \\
				\gbzero & \gbzero   & \gbzero   & \gbzero & \gbzero   & \gbzero   & \gbzero & \rL{1}    & \gbzero   \\
				\gbzero & \gbzero   & \gbzero   & \gbzero & \gbzero   & \gbzero   & \gbzero & \gbzero   & \rL{2}
			\end{array}\right],  \\
			& H_1 =
			\left[\begin{array}{@{\hspace*{-0.02in}}*{2}{*{3}{@{\hspace*{-0.02in}}c}|}*{3}{@{\hspace*{-0.02in}}c}@{\hspace*{-0.02in}}}
				\rL{3}    & \gbzero & \gbzero   & \gbzero   & \gbzero & \gbzero   & \gbzero   & \gbzero & \gbzero   \\
				\rL[-]{3} & \rL{4}  & \rL[-]{5} & \gbzero   & \gbzero & \gbzero   & \gbzero   & \gbzero & \gbzero   \\
				\gbzero   & \gbzero & \rL{5}    & \gbzero   & \gbzero & \gbzero   & \gbzero   & \gbzero & \gbzero   \\
				\hline
				\gbzero   & \gbzero & \gbzero   & \rL{3}    & \gbzero & \gbzero   & \gbzero   & \gbzero & \gbzero   \\
				\gbzero   & \gbzero & \gbzero   & \rL[-]{3} & \rL{4}  & \rL[-]{5} & \gbzero   & \gbzero & \gbzero   \\
				\gbzero   & \gbzero & \gbzero   & \gbzero   & \gbzero & \rL{5}    & \gbzero   & \gbzero & \gbzero   \\
				\hline
				\gbzero   & \gbzero & \gbzero   & \gbzero   & \gbzero & \gbzero   & \rL{3}    & \gbzero & \gbzero   \\
				\gbzero   & \gbzero & \gbzero   & \gbzero   & \gbzero & \gbzero   & \rL[-]{3} & \rL{4}  & \rL[-]{5} \\
				\gbzero   & \gbzero & \gbzero   & \gbzero   & \gbzero & \gbzero   & \gbzero   & \gbzero & \rL{5}
			\end{array}\right], \\
			& H_2 =
			\left[\begin{array}{@{\hspace*{-0.02in}}*{2}{*{3}{@{\hspace*{-0.02in}}c}|}*{3}{@{\hspace*{-0.02in}}c}@{\hspace*{-0.02in}}}
				\rL{6}    & \gbzero   & \gbzero & \gbzero   & \gbzero   & \gbzero & \gbzero   & \gbzero   & \gbzero \\
				\gbzero   & \rL{7}    & \gbzero & \gbzero   & \gbzero   & \gbzero & \gbzero   & \gbzero   & \gbzero \\
				\rL[-]{6} & \rL[-]{7} & \rL{8}  & \gbzero   & \gbzero   & \gbzero & \gbzero   & \gbzero   & \gbzero \\
				\hline
				\gbzero   & \gbzero   & \gbzero & \rL{6}    & \gbzero   & \gbzero & \gbzero   & \gbzero   & \gbzero \\
				\gbzero   & \gbzero   & \gbzero & \gbzero   & \rL{7}    & \gbzero & \gbzero   & \gbzero   & \gbzero \\
				\gbzero   & \gbzero   & \gbzero & \rL[-]{6} & \rL[-]{7} & \rL{8}  & \gbzero   & \gbzero   & \gbzero \\
				\hline
				\gbzero   & \gbzero   & \gbzero & \gbzero   & \gbzero   & \gbzero & \rL{6}    & \gbzero   & \gbzero \\
				\gbzero   & \gbzero   & \gbzero & \gbzero   & \gbzero   & \gbzero & \gbzero   & \rL{7}    & \gbzero \\
				\gbzero   & \gbzero   & \gbzero & \gbzero   & \gbzero   & \gbzero & \rL[-]{6} & \rL[-]{7} & \rL{8}
			\end{array}\right], \\
			& H_3 =
			\left[\begin{array}{@{\hspace*{-0.02in}}*{2}{*{3}{@{\hspace*{-0.02in}}c}|}*{3}{@{\hspace*{-0.02in}}c}@{\hspace*{-0.02in}}}
				\rL{9}  & \gbzero & \gbzero & \rL[-]{10} & \gbzero    & \gbzero    & \rL[-]{11} & \gbzero    & \gbzero    \\
				\gbzero & \rL{9}  & \gbzero & \gbzero    & \rL[-]{10} & \gbzero    & \gbzero    & \rL[-]{11} & \gbzero    \\
				\gbzero & \gbzero & \rL{9}  & \gbzero    & \gbzero    & \rL[-]{10} & \gbzero    & \gbzero    & \rL[-]{11} \\
				\hline
				\gbzero & \gbzero & \gbzero & \rL{10}    & \gbzero    & \gbzero    & \gbzero    & \gbzero    & \gbzero    \\
				\gbzero & \gbzero & \gbzero & \gbzero    & \rL{10}    & \gbzero    & \gbzero    & \gbzero    & \gbzero    \\
				\gbzero & \gbzero & \gbzero & \gbzero    & \gbzero    & \rL{10}    & \gbzero    & \gbzero    & \gbzero    \\
				\hline
				\gbzero & \gbzero & \gbzero & \gbzero    & \gbzero    & \gbzero    & \rL{11}    & \gbzero    & \gbzero    \\
				\gbzero & \gbzero & \gbzero & \gbzero    & \gbzero    & \gbzero    & \gbzero    & \rL{11}    & \gbzero    \\
				\gbzero & \gbzero & \gbzero & \gbzero    & \gbzero    & \gbzero    & \gbzero    & \gbzero    & \rL{11}
			\end{array}\right], \\
			& H_4 =
			\left[\begin{array}{@{\hspace*{-0.02in}}*{2}{*{3}{@{\hspace*{-0.02in}}c}|}*{3}{@{\hspace*{-0.02in}}c}@{\hspace*{-0.02in}}}
				\rL{12}    & \gbzero    & \gbzero    & \gbzero & \gbzero & \gbzero & \gbzero    & \gbzero    & \gbzero    \\
				\gbzero    & \rL{12}    & \gbzero    & \gbzero & \gbzero & \gbzero & \gbzero    & \gbzero    & \gbzero    \\
				\gbzero    & \gbzero    & \rL{12}    & \gbzero & \gbzero & \gbzero & \gbzero    & \gbzero    & \gbzero    \\
				\hline
				\rL[-]{12} & \gbzero    & \gbzero    & \rL{13} & \gbzero & \gbzero & \rL[-]{14} & \gbzero    & \gbzero    \\
				\gbzero    & \rL[-]{12} & \gbzero    & \gbzero & \rL{13} & \gbzero & \gbzero    & \rL[-]{14} & \gbzero    \\
				\gbzero    & \gbzero    & \rL[-]{12} & \gbzero & \gbzero & \rL{13} & \gbzero    & \gbzero    & \rL[-]{14} \\
				\hline
				\gbzero    & \gbzero    & \gbzero    & \gbzero & \gbzero & \gbzero & \rL{14}    & \gbzero    & \gbzero    \\
				\gbzero    & \gbzero    & \gbzero    & \gbzero & \gbzero & \gbzero & \gbzero    & \rL{14}    & \gbzero    \\
				\gbzero    & \gbzero    & \gbzero    & \gbzero & \gbzero & \gbzero & \gbzero    & \gbzero    & \rL{14}
			\end{array}\right], \\
			& H_5 =
			\left[\begin{array}{@{\hspace*{-0.02in}}*{2}{*{3}{@{\hspace*{-0.02in}}c}|}*{3}{@{\hspace*{-0.02in}}c}@{\hspace*{-0.02in}}}
				\rL{15}    & \gbzero    & \gbzero    & \gbzero    & \gbzero    & \gbzero    & \gbzero & \gbzero & \gbzero \\
				\gbzero    & \rL{15}    & \gbzero    & \gbzero    & \gbzero    & \gbzero    & \gbzero & \gbzero & \gbzero \\
				\gbzero    & \gbzero    & \rL{15}    & \gbzero    & \gbzero    & \gbzero    & \gbzero & \gbzero & \gbzero \\
				\hline
				\gbzero    & \gbzero    & \gbzero    & \rL{16}    & \gbzero    & \gbzero    & \gbzero & \gbzero & \gbzero \\
				\gbzero    & \gbzero    & \gbzero    & \gbzero    & \rL{16}    & \gbzero    & \gbzero & \gbzero & \gbzero \\
				\gbzero    & \gbzero    & \gbzero    & \gbzero    & \gbzero    & \rL{16}    & \gbzero & \gbzero & \gbzero \\
				\hline
				\rL[-]{15} & \gbzero    & \gbzero    & \rL[-]{16} & \gbzero    & \gbzero    & \rL{17} & \gbzero & \gbzero \\
				\gbzero    & \rL[-]{15} & \gbzero    & \gbzero    & \rL[-]{16} & \gbzero    & \gbzero & \rL{17} & \gbzero \\
				\gbzero    & \gbzero    & \rL[-]{15} & \gbzero    & \gbzero    & \rL[-]{16} & \gbzero & \gbzero & \rL{17}
			\end{array}\right],
		\end{align*}
	}where $\bzero$ is the all-zero vector of length $4$.

	\section{Non-optimal-access MSR code with smaller sub-packetization \texorpdfstring{$\ell = s^{n/(s+1)}$}{}}\label{sect:code2}
	
	In this section, we suppose that $(s+1)\mid n$ and set the sub-packetization $\ell = s^{\bar n}$, where $\bar n = n/(s+1)$.
	The codeword $(C_0, C_1, \cdots C_{n-1})$ is divided into $\bar n$ groups of size $s+1$.
	(Note that in the previous construction, the group size is $s$.)
	We use $a\in [\bar n], b\in [s+1]$ to denote the index of the group and the index of the node within its group, i.e.,
	for each $a\in [\bar n]$, the $s+1$ nodes $(C_{a(s+1)+b}: b\in [s+1])$ are in group $a$.
	
	\subsection{Construction}
	
	If we choose $ns$ distinct elements $\lambda_{[ns]}$ from the finite field $\Fq$, then the kernel matrices $K_{a,b}^{(t)}$ for $a\in [\bar n], b\in [s+1]$ and a positive integer $t$ can be defined as
	\begin{equation}\label{eq:Kab2}
		K_{a,b}^{(t)} = \cK_{b}^{(t)}(\lambda_{as(s+1)+bs+[s]}).
	\end{equation}
	And we have the $\ell t\times \ell$ matrices
	\begin{equation}\label{eq:Mab2}
		M_{a,b}^{(t)} = \cM_{a,b}^{(t)}(\lambda_{as(s+1)+bs+[s]}).
	\end{equation}
	
	For $a\in [\bar n]$, a nonempty subset $B\subseteq[s+1]$ of size $t$, and a positive integer $m \ge t$,
	we define the $\ell m\times \ell t$ matrices $M_{a,B}^{(m)}$ in the same way as \eqref{eq:MaB}.
	Similarly, we omit the superscript if $m = t$.
	
	The construction in this subsection also requires that the $ns$ distinct $\lambda_i$s satisfy the
	following local constraints
	\begin{equation} \label{eq:localB}
		\det(M_{a, B})\not=0,~~ a \in [\bar n],~~ \emptyset\neq B \subseteq [s+1],
	\end{equation}
	i.e., all the square matrices $M_{a, B}$ are invertible.
	The existence of such $\lambda_i$s in a field of size $O_{s}(n)$ is guaranteed by the following result.
	\begin{lemma}\label{lemma:FieldSizeB}
		If a prime power $q \ge ns + s 2^{s-1}$, then we can find $ns$ distinct elements $\lambda_0, \dots, \lambda_{ns-1}$ from $\Fq$ satisfying \eqref{eq:localB}.
		Moreover, if $n$ is large enough compared to $s$, then we can find these elements in $\Fq$ of order $q\ge ns+1$.
	\end{lemma}
	We omit the proof of Lemma~\ref{lemma:FieldSizeB} as it is similar to that of Lemma~\ref{lemma:FieldSizeA}.
	
	Now we are ready to define the parity check submatrices for the $(n,k,\ell = s^{n/(s+1)})$ MSR code.
	For each $a\in [\bar n]$ and $b\in [s+1]$,
	the $r\ell\times \ell$ matrix $H_{a(s+1)+b}$ in \eqref{eq:parity_check} is defined as
	\begin{equation}\label{eq:pcmB}
		H_{a(s+1)+b} = M_{a,b}^{(r)}.
	\end{equation}
	
	\subsection{MDS property}
	The MDS property of our MSR codes defined by \eqref{eq:parity_check}, \eqref{eq:cK}, \eqref{eq:cM}, \eqref{eq:Kab2}, \eqref{eq:Mab2}, \eqref{eq:localB} and \eqref{eq:pcmB} will follow directly from
	the following global constraints.
	\begin{lemma}\label{lemma:globalB}
		For any $z$ distinct integers $a_0, a_1, \dots, a_{z-1} \in [\bar n]$ and $z$ subsets $B_0,B_1,\dots, B_{z-1}\subseteq [s+1]$ satisfying
		$|B_0| + |B_1| + \cdots + |B_{z-1}| = m\le r$, we have
		\begin{equation*}
			\det\left([ M_{a_0, B_0}^{(m)}\smat M_{a_1, B_1}^{(m)}\smat \dots \smat M_{a_{z-1}, B_{z-1}}^{(m)} ]\right)\not=0.
		\end{equation*}
	\end{lemma}
	Since the proof of Lemma~\ref{lemma:globalB} is almost the same as that of Lemma~\ref{lemma:globalA},
	we omit it.
	
	\subsection{ Repair scheme of \texorpdfstring{$(n,k,s^{n/(s+1)})$}{} MSR codes.}
	We still denote the index set of helper nodes as $\cH$.
	We divide the discussion into two cases:
	\begin{itemize}
		\item \textbf{Repairing the failed node $C_{a(s+1)+b}$, for $a\in [\bar n]$ and $b\in [s]$.}
		For any set $\cH\subseteq[n]\setminus\{a(s+1)+b\}$ with size $|\cH| = d$, the node $C_{a(s+1)+b}$ can be recovered from $$\{R_{a,b}C_i : i\in \cH\}$$.
		\item \textbf{Repairing the failed node $C_{a(s+1)+s}$, for $a\in [\bar n]$.}
		For any set $\cH\subseteq[n]\setminus\{a(s+1)+s\}$ with size $|\cH| = d$,
		we set
		$$
		\cH_1 = \cH \cap \big(a(s+1)+[s+1]\big),
		$$
		and set
		$$
		\cH_2 = \cH \setminus \big(a(s+1)+[s+1]\big).
		$$
		Then the last node $C_{a(s+1)+s}$ in group $a$ can be recovered from
		$$\{R_{a,z}C_{a(s+1)+z}: a(s+1)+z\in \cH_1\}$$
		and
		$$\{R_{a,s}C_i : i\in \cH_2\}.$$
	\end{itemize}
	
	From the above repair scheme, we can see that
	the repair procedure of failed node $C_{a(s+1)+b}$ for $b\in [s]$ also has the optimal access property and constant repair property.
	But the repair procedure of the last node of each group does not have these two properties.
	
	Next, we illustrate the repair procedure.
	The procedure to repair the first $s$ nodes of each group is exactly the same as the repair procedure of $(n,k,\ell = s^{(n/s)})$ MSR code described in Section~\ref{sect:repair1}, so we omit the detail.
	
	{\bf How to repair the last node $C_{a(s+1)+s}$ of group $a$.}
	The $\ell$ parity check equations involved in $H_i, i\in [n]$ are divided into $\bar \ell = \ell/s$ groups, each of $s$ equations.
	The equation indices $i$ within the same group only differ in the $a$th digit of their expansion in base $s$.
	In order to repair $C_{s}$, we sum up each group of $s$ block rows of all matrices in \eqref{eq:parity_check}.
	To be specific, we have
	\begin{align}
		& (R_{a,s}\otimes \bI_{r})(\sum_{i\in [n]}H_iC_i)\nonumber                                                      \\
		= & \sum_{i\in [n]} (R_{a,s}\otimes\bI_{r}) H_i(\sum_{z\in [s]}R_{a,z}^T R_{a,z})C_i\nonumber                     \\
		= & \sum_{i\in [n]}\sum_{z\in [s]} \big[(R_{a,s}\otimes\bI_{r}) H_i  R_{a,z}^T\big](R_{a,z}C_i)\nonumber          \\
		= & \sum_{z\in [s]}\big[(R_{a,s}\otimes\bI_{t})H_{a(s+1)+s}R_{a,z}^T\big](R_{a,z}C_{a(s+1)+s})\nonumber           \\
		& +\sum_{b\in [s]}\big[(R_{a,s}\otimes\bI_{t})H_{a(s+1)+b}R_{a,b}^T\big](R_{a,b}C_{a(s+1)+b})\nonumber          \\
		& +\sum_{i\in [n]\setminus(a(s+1)+[s+1])}\big[(R_{a,s}\otimes\bI_{t})H_{i}R_{a,0}^T\big](R_{a,s}C_{i})\nonumber \\
		= & \bzero.\nonumber
	\end{align}
	from \eqref{eq:parity_check}, \eqref{eq:cK}, \eqref{eq:cM}, \eqref{eq:Mab2}, \eqref{eq:pcmB} and Lemma~\ref{lemma:repairB}.
	For any $z\in [s]$, we set
	\begin{align*}
		\bar H_{a(s+1)+s}^{\left\langle z\right\rangle } = & (R_{a,s}\otimes\bI_{t})H_{a(s+1)+s}R_{a,z}^T, \\
		\bar C_{a(s+1)+s}^{\left\langle z\right\rangle } = & R_{a,z}C_{a(s+1)+s}.
	\end{align*}
	For any $b\in [s]$, we set
	\begin{align*}
		\bar H_{a(s+1)+b} = & (R_{a,s}\otimes\bI_{t})H_{a(s+1)+b}R_{a,b}^T, \\
		\bar C_{a(s+1)+b} = & R_{a,b}C_{a(s+1)+b}.
	\end{align*}
	For any $i\in [n]\setminus(a(s+1)+[s+1])$, we set
	\begin{align*}
		\bar H_{i} = & (R_{a,s}\otimes\bI_{t})H_{i}R_{a,0}^T, \\
		\bar C_{i} = & R_{a,s}C_{i}.
	\end{align*}
	Then the repair equation of the failed node $C_{a(s+1)+b}$ can be written as
	\begin{equation}\label{eq:pcrB}
		\sum_{i\in[n]\setminus\{a(s+1)+s\}}\bar H_i\bar C_i + \sum_{z\in[s]}\bar{H}_{a(s+1)+s}^{\left\langle z\right\rangle}\bar{C}_{a(s+1)+s}^{\left\langle z\right\rangle}  =\bzero.
	\end{equation}
	By Lemma~\ref{lemma:repairB}, we know that all these $n-1+s$ matrices $\bar{H}_{i}, i\in [n]\setminus\{a(s+1)+s\}$, $\bar H_{a(s+1)+s}^{\left\langle z\right\rangle}, z\in [s]$ are $\bar{\ell}\times \bar{\ell}$ block matrices whose entries are column vectors of length $r$,
	and $\bar{C}_i, i\in [n]\setminus \{a(s+1)+s\}$, $C_{a(s+1)+s}^{\left\langle z\right\rangle }, z\in [s]$ are length-$\bar{\ell}$ column vectors.
	
	To be specific, let $L_i^{(r)} = L^{(r)}({\lambda_i})$, we have
	\begin{itemize}
		\item[(i)] for any $z\in [s]$
		$$\bar H_{a(s+1)+s}^{ \left\langle z\right\rangle } =   \bI_{\bar{\ell}} \otimes L_{as(s+1)+s^2+z}^{(r)};$$
		\item[(ii)] for $b\in [s]$,
		\begin{equation*}
			\bar H_{a(s+1)+b} =  \bI_{\bar{\ell}}\otimes L_{as(s+1)+bs+b}^{(r)};
		\end{equation*}
		\item[(iii)] and for $g\in [\bar n]\setminus\{a\}$, $h\in [s+1]$,
		\begin{align*}
			\bar{H}_{g(s+1)+h} = \bar \cM_{\bar g,h}^{(r)}(\lambda_{gs(s+1)+hs+[s]}),
		\end{align*}
		where $\bar g$ is defined as in Lemma~\ref{lemma:repairB}.
	\end{itemize}
	One can see that the matrices $\bar H_{gs+h}, g\in[\bar n]\setminus \{a\}, h\in [s+1]$ are precisely the $n-s-1$ parity check submatrices that would appear in the
	MSR code construction with code length $n-s-1$ and sub-packetization $\bar{\ell}$.
	The other matrices $\bar{H}_{a(s+1)+b}, b\in [s]$ and $\bar{H}_{a(s+1)+s}^{\left\langle z\right\rangle }, z\in [s]$ are block-diagonal matrices,
	and the diagonal entries are the same within each matrix.
	Moreover, the $\lambda_i$'s that appear in $\bar{H}_{a(s+1)+b}, b\in [s]$ and $\bar{H}_{a(s+1)+s}^{\left\langle z\right\rangle }, z\in [s]$
	do not intersect with the $\lambda_i$'s that appear in  $\bar{H}_{i}, i\in[n]\setminus([a(s+1)+[s+1])$.
	The method we used to prove the MDS property of $(n,k,\ell = s^{n/(s+1)})$ MSR code construction can be generalized to show that \eqref{eq:pcrB} also defines an
	$(n+s-1,k+s-1,\bar \ell)$ MDS array code
	$$(\bar{C}_0,\dots, \bar{C}_{m-1},\quad \bar{C}_{m}^{\left\langle 0\right\rangle }, \dots, \bar{C}_{m}^{\left\langle s-1\right\rangle },\quad \bar{C}_{m+1},\dots, \bar{C}_{n-1}),$$ where $m = a(s+1)+s$.
	Therefore, $(\bar{C}_m^{\left\langle z\right\rangle }, z\in [s])$ can be recovered from any $d = k+s-1$ vectors in the set $(\bar{C}_z, z\in[n]\setminus\{m\})$.
	When we know the values of $(\bar{C}_m^{\left\langle z\right\rangle }, z\in [s])$, we are able to recover all the coordinates of $C_m$.
	
	\appendix
	
	\section{Proof of Lemma~\ref{lemma:permutation}}\label{sect:proof_permu}
	In this section, we write $L_i^{(t)} = L^{(t)}(x_i)$.
	For any index $i\in [\ell]$, we can write it as
	\begin{equation}\label{eq:base_s}
		i = \sum_{z\in [\bar n]}i_z\cdot s^z = (i_{\bar n-1}, \dots, i_1, i_0),
	\end{equation}
	where each $i_z\in [s]$.
	Now we treat $\cM_{a,b}^{(t)}$ as an $\ell \times \ell$ block matrix, whose block entries are coloumn vectors of length $t$.
	Using the $\bar{n}$-digit base-$s$ expansion of index $i,j\in [\ell]$,
	we can verify that the block entry of $\cM_{a,b}^{(t)}$ at the cross of the $i$th block row and the $j$th (block) row is
	\begin{equation}\label{eq:cM_base_s}
		\cM_{a,b}^{(t)}(i, j) =
		\begin{cases}
			\wn L_{j_a}^{(t)} & \text{if~} i = j,                                  \\
			-L_{j_a}^{(t)}    & \text{if~} i_a=b\neq j_a   \text{~and}             \\
			& \ i_z=j_z\text{~for~} z\in [\bar n]\setminus\{a\}, \\
			\wn \bzero        & \text{otherwise.}
		\end{cases}
	\end{equation}
	Next, we define a permutation
	$\pi_{g,h} : [\ell]\to [\ell]$ by
	\begin{align*}
		\pi_{g,h}(i)  = & i_gs^h + i_hs^g + \sum_{z\in [\bar n]\setminus\{g,h\}} i_zs^z = \sum_{z\in [\bar n]}\tilde i_zs^z = \tilde i,
	\end{align*}
	i.e., $\tilde i$ is obtained from exchanging values of $i_g$ and $i_h$ in base-$s$ expansion of $i$.
	Hence $\pi_{g,h}=\pi_{h,g}$ and $\pi_{g,h} \circ \pi_{g,h}$ is the identity permutation.
	We denote the permutation matrix associated to $\pi_{g,h}$ as $\bP_{g,h}$.
	Then the $i$-th row of $\bP_{g,h}$ is exactly the $\tilde i$th row of $\bI_{\ell}$, i.e., for $i,j\in [\ell],$
	$$\bP_{g,h}(i,j) = \bI_{\ell}(\pi_{g,h}(i), j) = \bI_{\ell}(i, \pi_{g,h}(j)).$$
	
	By the properties of $\pi_{g,h}$, we directly have $\bP_{g,h}=\bP_{h,g}$ and $\bP_{g,h} \bP_{g,h} = \bI_{\ell}.$
	For $a\in[\bar n]$, $b\in[s+1]$ and a positive integer $t$, we define
	$$\widetilde \cM_{a,b}^{(t)} = (\bP_{g,h}\otimes \bI_t)\cM_{a,b}^{(t)}\bP_{g,h}.$$
	We can check that
	$$\widetilde{\cM}_{a,b}^{(t)}(i,j) = \cM_{a,b}^{(t)}(\pi_{g,h}(i), \pi_{g,h}(j)) = \cM_{a,b}^{(t)}(\tilde{i}, \tilde{j}).$$
	Here we also treat $\widetilde{\cM}_{a,b}^{(t)}$ as an $\ell \times \ell$ block matrix, whose block entries are coloumn vectors of length $t$.
	For $b\in [s+1]$, $i,j\in[\ell]$, by \eqref{eq:cM_base_s} we have
	\begin{equation*}
		\begin{aligned}
			\cM_{g,b}^{(t)}(\tilde{i}, \tilde{j})=
			\begin{cases}
				\wn L_{\tilde j_g}^{(t)} & \text{if~} \tilde i = \tilde j,                                 \\
				-L_{\tilde j_g}^{(t)}    & \text{if~} \tilde i_g=b\neq \tilde j_g  \text{~and}             \\
				& \ \tilde i_z=\tilde j_z\text{~for~}z\in [\bar n]\setminus\{g\}, \\
				\wn \bzero               & \text{otherwise,}
			\end{cases} \\
			\cM_{h,b}^{(t)}(i,j)=
			\begin{cases}
				\wn L_{j_h}^{(t)} & \text{if~}   i =   j,                              \\
				-L_{j_h}^{(t)}    & \text{if~} \ i_h=b\neq   j_h   \text{~and}         \\
				& \ i_z= j_z\text{~for~}z\in [\bar n]\setminus\{h\}, \\
				\wn \bzero        & \text{otherwise.}
			\end{cases}
		\end{aligned}
	\end{equation*}
	One can check that
	$$
	\widetilde\cM_{g,b}^{(t)}(i,j) = \cM_{g,b}^{(t)}(\tilde{i}, \tilde{j}) = \cM_{h,b}^{(t)}(i,j),
	$$
	thus $\widetilde \cM_{g,b}^{(t)}= \cM_{h,b}^{(t)}.$
	By the same way, we can verify that $\widetilde \cM_{h,b}^{(t)}= \cM_{g,b}^{(t)}$ and for $a\in[\bar n]\setminus\{g,h\}$, $\widetilde \cM_{a,b}^{(t)} = \cM_{a,b}^{(t)}$.

	\section{Proof of Lemma~\ref{lemma:UKB}}\label{sect:proof_UKB}
	
	In this section, we write $K = \cK_B^{(m)}$, which is an $sm\times st$ matrix.
	We use $K(i,\rc), i\in [sm]$ to represent the $i$th row of $K$, $K(\rc, j), j\in [st]$ represent the $j$th column of $K$, and $K(i,j)$ denote its entry at the cross of the $i$th row and the $j$th column.
	\begin{itemize}
		\item [(i)]
		Since
		$\cK_{B}^{(t)}$ is invertible, the $st$ rows
		$$
		K(um+v,\rc), u\in [s], v\in [t]
		$$
		are linear independent.
		Thus we can write each row in $K$ as a linear combination of these $st$ rows.
		To begin with,
		for each $z\in [s]$, we have
		\begin{align}\label{eq:cancel_t}
			K(zm+t,\rc)  + \sum_{u\in [s], v\in [t]} c_{z,u,v}K(um+v,\rc) = \bzero,
		\end{align}
		where $c_{z,u,v}$ are the coefficient of the linear combination.
		The equation~\eqref{eq:cancel_t} means that we use the rows in $\cK_{B}^{(t)}$ to cancel the $(zm+t)$th row in $K$.
		
		Back to the structure of $K$, for each $z\in [s]$, $i\in [t]$, $j\in [s]$, we know the following column vector
		\begin{align*}\left[
			\begin{array}{ll}
				K(zm,    is+j)      \\
				K(zm+1,  is+j)      \\
				\qquad\qquad \vdots \\
				K(zm+m-1,is+j)
			\end{array}\right]
		\end{align*}
		is either $\pm L^{(m)}(x_{is+j})$ or $\bzero$.
		Thus, for each $t<w<m$, we have $$K(zm+w,is+j) = x_{is+j}^{w-t}K(zm+t, is+j).$$
		
		Combining the above with~\eqref{eq:cancel_t}, it implies that
		\begin{align}\label{eq:cancel_w}
			K(zm+w, \rc)  + \hspace*{-1.5em}\sum_{u\in [s], v\in [t]}\hspace*{-1em} c_{z,u,v}K(um+v+w-t,\rc) = \bzero.
		\end{align}
		
		Given these coefficients $c_{z,u,v}, z,u\in[s], v\in [t]$, we construct the following two matrices.
		Firstly, we define an $s\times sm$ matrix $U^*$ as follows. For each $z,u\in [s]$ and $v\in [m]$,
		\begin{align}
			U^*(z,um+v) = \begin{cases}
				c_{z,u,v} & \text{if~} v\in [t], \\
				1         & \text{if~} u=z, v=t, \\
				0         & \text{otherwise.}
			\end{cases}\label{eq:U1}
		\end{align}
		From \eqref{eq:cancel_t}, we know that
		\begin{equation}\label{eq:U1KO}
			U^*  \cdot K = \bO.
		\end{equation}
		
		Next, we define an $sm\times sm$ matrix $\bar U$ as follows. For each $z,u\in [s], w, v\in [m]$,
		\begin{align*}
			\bar U(zm+w,um+v) = \begin{cases}
				1             & \hspace*{-4em}\text{if~} z=u, w=v, \\
				c_{z,u,v-w+t} &                                    \\&\hspace*{-4em}\text{if~} w\in [m]\setminus[t], \\&\hspace*{-4em}\text{and~} v\in w-t+[t],\\
				0             & \hspace*{-4em}\text{otherwise.}
			\end{cases}
		\end{align*}
		The matrix $\bar U$ can be seen as an $s\times s$ block matrix, where each block is an $m\times m$ matrix.
		The structure of the block entry of $\bar U$ at the $z$th block row and the $u$th block column is shown in Fig.~\ref{fig:barU}.
		We can compute that if $w\in [t]$, then $$(\bar UK)(zm+w, \rc) = K(zm+w, \rc).$$
		Otherwise, $w\in [m]\setminus[t]$, from \eqref{eq:cancel_w}, $$(\bar UK)(zm+w, \rc) = \bzero.$$
		\begin{figure}
			\centering
			\includegraphics[scale = 0.65]{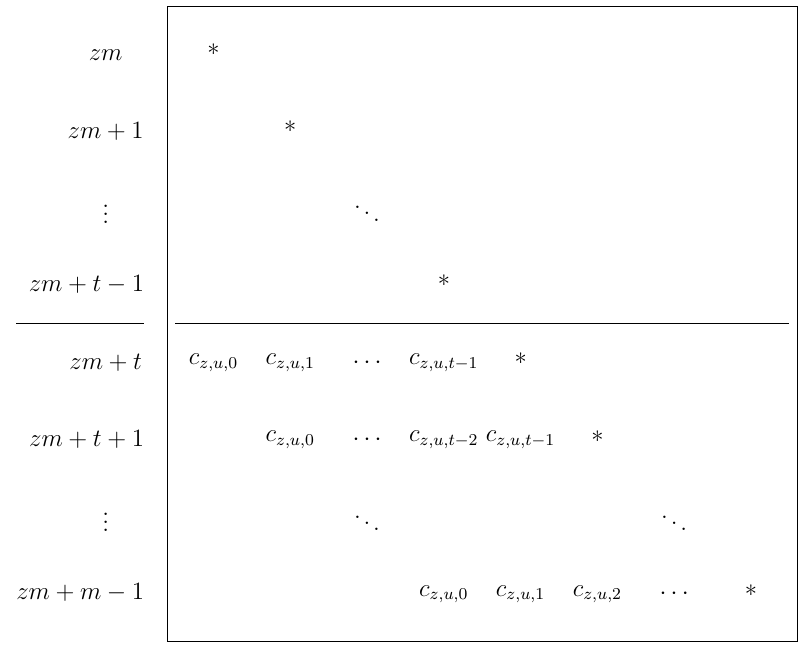}
			\caption{The scalar entries in the $z$th block row and the $u$th block column of the matrix $\bar U$, where each block of $\bar U$ has the same size $m\times m$.
				Please note that if $z=u$, the ``*'' sign is $1$.
				Otherwise, $z\neq u$, the ``*'' sign is $0$.}\label{fig:barU}
		\end{figure}
		Let
		\begin{align*}
			E = \left[\begin{array}{l}
				\bI_s\otimes \left[\begin{array}{cc} \bI_t & \hspace*{.05in}\bO\end{array}\hspace*{.135in}\right] \\[1ex]
				\bI_s\otimes \left[\begin{array}{cc} \bO   & \bI_{m-t}\end{array}\right]
			\end{array}\right],
		\end{align*}
		and let $U = E\bar U$.
		We can check that
		\begin{align*}
			U K =
			\left[\begin{array}{c}
				\cK_{B}^{(t)} \\
				\bO
			\end{array} \hspace*{-0.05in}\right],
		\end{align*}
		where the function of matrix $E$ is to move the last $m-t$ rows of each block row of $\bar UK $, in this case they are all-zero rows, to the bottom.
		
		\item [(ii)]
		Firstly, we define an $s\times s$ matrix
		\begin{equation*}
			F(x) = U^*\cdot(\bI_s\otimes L^{(m)}(x)) = (f_{z,u}(x), z,u\in [s]).
		\end{equation*}
		By  definition \eqref{eq:U1}, we can check that
		\begin{align}
			f_{z,u}(x) = & \sum_{v\in [m]} U^*(z,um+v) x^{v}\nonumber                                  \\
			=            & \begin{cases}
				\sum_{v\in [t]} c_{z,u,v} x^{v} + x^{t} & \text{if~} z=u,   \\
				\sum_{v\in [t]} c_{z,u,v} x^{v}         & \text{otherwise.}
			\end{cases}\label{eq:fzu}
		\end{align}
		This implies that $f_{z,u}$ is a nonzero polynomial in $\Fq[x]$ of degree $t$ if $z=u$, while $f_{z,u}$ is a polynomial of degree less than $t$ if $z\neq u$.
		
		Next, we analyze the element of the $sm\times s$ matrix $A = \bar U(\bI_s\otimes L^{(m)}(x))$.
		For the entry at the cross $(zm+w)$th row and $u$th column, $z,u\in [s], w\in [m]$, we have
		\begin{align*}
			& A(zm+w,u)=\sum_{v\in [m]} \bar U(zm+w, um+v) x^{v}                                                                                                                    \\
			= & \begin{cases}
				x^{w}                                         & \hspace*{-5em}\text{if~} z = u, w\in [t], \\
				\sum_{v'\in [t]} c_{z,u,v'}x^{v'+w-t} + x^{w} &                                           \\& \hspace*{-5em} \text{if~} z=u,w\in [m]\setminus[t],     \\
				\sum_{v'\in [t]} c_{z,u,v'}x^{v'+w-t}         &                                           \\& \hspace*{-5em}\text{if~} z\neq u, w\in [m]\setminus[t], \\
				0                                             & \hspace*{-5em}\text{otherwise,}
			\end{cases}
		\end{align*}
		where $v' = v-w+t$.
		Combining with \eqref{eq:fzu}, we know that
		\begin{align*}
			A(zm+w,u)=
			\begin{cases}
				x^{w}                     & \text{if~} z = u, w\in [t],      \\
				f_{z,u}(\lambda_i)x^{w-t} & \text{if~} w\in [m]\setminus[t], \\
				0                         & \text{otherwise.}
			\end{cases}
		\end{align*}
		Thus, we know that
		$$E A = U\cdot(\bI_s\otimes L^{(m)}(x)) = \left[\begin{array}{c}
			\bI_{s}\otimes L_i^{(t)} \\
			F(x)\otimes L_i^{(m-t)}
		\end{array} \hspace*{-0.05in}\right].$$
		
		The last thing we need to show is that $F(x)$ is invertible.
		This is equivalent to showing $f(x) = \det(F(x))\neq 0$.
		From \eqref{eq:fzu}, we know that $f$ is a nonzero polynomial of degree $st$.
		Due to $x\notin x_{[st]}$, it is sufficient to show that these $st$ elements $x_{[st]}$ are exactly the $st$ distinct roots of $f$.
		
		Suppose that $B = \{b_0, \dots, b_{t-1}\}$, where $b_0 < \dots, b_{t-1}$.
		According to~\eqref{eq:cK}, \eqref{eq:cKB}, by comparing the columns of the matrix $K$ and $\bI_s\otimes L^{(m)}(x_{is+j})$,
		we observe that for any $i\in [t]$ and $j\in [s]$:
		\begin{itemize}
			\item[(1)]  if $j=b_i$, then $$K(\rc,is+j)=(\bI_s\otimes L^{(m)}(x_{is+j}))(\cdot,j);$$
			\item[(2)]  if $j\neq b_i$, then
			\begin{align*}
				K(\rc,is+j)= & (\bI_s\otimes L^{(m)}(x_{is+j}))(\rc,j) \\&-(\bI_s\otimes L^{(m)}(x_{is+j}))(\rc ,b_i).
			\end{align*}
		\end{itemize}
		From~\eqref{eq:U1KO} we have $U^*\cdot K(\rc,is+j)=\bzero$.
		Combining this with above observation, we see that
		the columns of $F(x_{is+j})=U^* \cdot (\bI_s\otimes L^{(m)}(x_{is+j}))$ are linearly dependent.
		Therefore, the determinant $f(x_{is+j}) = 0$ for $i\in [t], j\in [s]$,
		and $x_{[st]}$ are exactly the $st$ roots of $f$.
		This implies that for each $x\notin x_{st}$, $F(x)$ is invertible.
	\end{itemize}
	\section{Proof of Lemma~\ref{lemma:VMaB}}\label{sect:proof_VMaB}
	
	We will use Lemma~\ref{lemma:UKB} to show our results for the special case of $a=0$.
	Following that, we use Lemma~\ref{lemma:permutation} to prove all general cases.
	
	Recall that $\cM_{0,B}^{(m)} = \bI_{\ell/s}\bt \cK_B^{(m)}$, where we see $\cK_{B}^{(m)}$ as a $1\times t$ block matrix as written in \eqref{eq:cKB}.
	For convenience, we write $\cK_h^{(m)} = \cK_h^{(m)}(y_{[s]})$ and $\cM_{g,h}^{(m)} = \cM_{g,h}^{(m)}(y_{[s]})$.
	
	According to Corollary~\ref{corollary:invert}, we know that $\cK_B^{(t)}$ is also invertible.
	By Lemma~\ref{lemma:UKB}, there exists an $sm\times sm$ matrix $U$ satisfying \eqref{eq:cancel_K} and \eqref{eq:cancel_diag_L}.
	Let
	\begin{equation*}
		G =
		\left[\begin{array}{l}
			\bI_{\ell/s}\otimes \left[\begin{array}{cc} \bI_{st} & \bO\end{array}\hspace*{.23in}\right] \\[1ex]
			\bI_{\ell/s}\otimes \left[\begin{array}{cc} \bO   & \bI_{s(m-t)}\end{array}\right]
		\end{array}\right],
	\end{equation*}
	and let $ V = G (\bI_{\ell/s}\otimes U)$.
	We can verify that when $a=0$,
	such $\ell m\times \ell m$ matrix $V$ satisfies all the properties we require.
	
	\begin{itemize}
		\item [(i)] According to \eqref{eq:cancel_K}, we know that
		\begin{equation}\label{eq:iea}
			\begin{aligned}
				& (\bI_{\ell/s}\otimes U)  \cM_{0,B}^{(m)}                \\
				= & (\bI_{\ell/s}\otimes U)   (\bI_{\ell/s}\bt \cK_B^{(m)}) \\
				= & \bI_{\ell/s}\bt \left[ U \cK_B^{(m)}\right]             \\
				= & \bI_{\ell/s}\bt \left [
				\begin{array}{c}
					\cK_{B}^{(t)} \\
					\bO
				\end{array}
				\right],
			\end{aligned}
		\end{equation}
		where $\left[U \cK_{B}^{(m)}\right]$ and $\left [
		\begin{array}{c}
			\cK_{B}^{(t)} \\
			\bO
		\end{array}
		\right]$ are regarded as $1\times t$ block matrices with block entries of the same size.
		Then we have
		\begin{align*}
			V  \cM_{0,B}^{(m)} =
			\left[\begin{array}{c}
				\cM_{0,B}^{(t)} \\
				\bO
			\end{array} \right].
		\end{align*}
		Here we use the permutation matrix
		$G$
		to move the all-zero rows of \eqref{eq:iea} to the bottom.
		
		\item[(ii)]
		Since
		\begin{align}
			\cM_{g,h}^{(m)} =         & \bI_{\ell/s^{g+1}}\otimes(\bI_{s^g}\bt \cK_{h}^{(m)}) \nonumber                      \\
			=                         & \bI_{\ell/s^{g+1}}\otimes(\bI_{s^{g-1}}\bt (\bI_{s}\bt \cK_{h}^{(m)})),\label{eq:36} \\
			(\bI_{\ell/s}\otimes U) = & \bI_{\ell/s^{g+1}}\otimes(\bI_{s^{g-1}}\bt (\bI_{s}\otimes U))\label{eq:37},
		\end{align}
		and
		\begin{align}\label{eq:38}
			G =\hspace*{-.5em}
			\left[\hspace*{-.5em}\begin{array}{l}
				\bI_{\ell/s^{g+1}}\otimes \Big(\bI_{s^{g-1}}\bt \big(\bI_s\otimes \left[\bI_{st}\smat  \hspace*{1.05em}\bO\hspace*{1.05em}\right]\big)\Big) \\[1ex]
				\bI_{\ell/s^{g+1}}\otimes \Big(\bI_{s^{g-1}}\bt \big(\bI_s\otimes \left[\bO \smat   \bI_{s(m-t)}\right]\big)\Big)
			\end{array}\hspace*{-.8em}\right],
		\end{align}
		where $\bI_s\bt \cK_{h}^{(m)}$ in  \eqref{eq:36}, $\bI_{s}\otimes U$ in \eqref{eq:37}, $\bI_s\otimes [\begin{array}{cc} \bI_{st} & \bO\end{array}]$ and $\bI_s\otimes [\begin{array}{cc} \bO   & \bI_{s(m-t)}\end{array}]$ in \eqref{eq:38} are uniformly partitioned into $s\times s$ block matrices.
		We have
		\begin{align*}
			V \cM_{g,h}^{(m)}
			=\hspace*{-.5em}
			\left[\hspace*{-.5em}\begin{array}{l}
				\bI_{\ell/s^{g+1}}\otimes \big(\bI_{s^{g-1}}\bt K'\big) \\[1ex]
				\bI_{\ell/s^{g+1}}\otimes \big(\bI_{s^{g-1}}\bt K''\big)
			\end{array}\hspace*{-.8em}\right],
		\end{align*}
		where
		\begin{align*}
			K'  = \big(\bI_s\otimes \left[\bI_{st}\smat  \hspace*{1.05em}\bO\hspace*{1.05em}\right]\big) (\bI_s\otimes U) (\bI_s \bt \cK_{h}^{(m)}), \\
			K'' = \big(\bI_s\otimes \left[\bO \smat   \bI_{s(m-t)}\right]\big) (\bI_s\otimes U) (\bI_s \bt \cK_{h}^{(m)}).
		\end{align*}
		Let $K = (\bI_s\otimes U) (\bI_s \bt \cK_{h}^{(m)})$.
		If we regard $K$ as an $s\times s$ block matrix with block entries of the same size, then by \eqref{eq:cK}, and \eqref{eq:cancel_diag_L} in Lemma~\ref{lemma:UKB},
		we have that for $i,j\in [s]$
		$$
		K(i,j) = \begin{cases}
			\wn\left[\begin{array}{c} \bI_{s}\otimes L^{(t)}(y_j) \\ F(y_{j})\otimes L^{(m-t)}(y_j) \end{array}\right] & \text{if~} i = j,        \\[1.5em]
			-\left[\begin{array}{c} \bI_{s}\otimes L^{(t)}(y_j) \\ F(y_{j})\otimes L^{(m-t)}(y_j) \end{array}\right]   & \text{if~} i = h \neq j, \\[1.5em]
			\wn \bI_{s} \otimes \bzero                                                                                 & \text{otherwise,}
		\end{cases}
		$$
		where $F(y_{j})$ is an $s\times s$ invertible matrix for all $j\in [s]$.
		Now we get that,
		\begin{align*}
			K' =  \bI_{s}\bt \cK_{h}^{(t)},\quad K''= (\bI_{s}\bt \cK_{h}^{(m-t)}) P'
		\end{align*}
		where $P' = \diag(F(y_{j}) : j\in [s])$ is a $s^2\times s^2$ invertible matrix.
		Thus we get that
		\begin{align*}
			V \cM_{g,h}^{(m)}
			=\hspace*{-.5em}
			\left[\hspace*{-.5em}\begin{array}{l}
				\cM_{g,h}^{(t)} \\[1ex]
				\cM_{g,h}^{(m-t)} P
			\end{array}\right],
		\end{align*}
		where $P = \bI_{\ell/s^{g+1}} \otimes (\bI_{\ell/s^{g-1}}\bt P')$.
		It is easy to check that $Q$ is also an invertible matrix.
		If we write $\widehat{\cM}_{g,h}^{(m-t)} = \cM_{g,h}^{(m-t)} P$, then we know that $\widehat{\cM}_{g,h}^{(m-t)}$ is equivalent to ${\cM}_{g,h}^{(m-t)}$.
		
		\item [(iii)] Similar as the analysis in $(ii)$, we know that
		\begin{align*}
			V (\bI_{\ell}\otimes L^{(m)}(x)) =
			\left[\begin{array}{c}
				\bI_{\ell}\otimes L^{(t)}(x) \\
				(\bI_{\ell} \otimes L^{(m-t)}(x))(\bI_{\ell/s}\otimes F(x))
			\end{array} \hspace*{-0.05in}\right]
		\end{align*}
		where $F(x)$ is an $s\times s$ invertible matrix.
	\end{itemize}
	
	For the case of $a\neq 0$, by Lemma~\ref{lemma:permutation}, we know that there exists a permutation matrix $\bP_{a,0}$ such that
	\begin{align*}
		(\bP_{a,0}\otimes \bI_{m})\cM_{a,B}^{(m)}(\bI_{t}\otimes \bP_{a,0}) = \bI_{\ell/s}\bt\cK_{B}^{(m)}.
	\end{align*}
	and
	\begin{align*}
		(\bP_{a,0}\otimes\bI_m)\cM_{g,h}^{(m)} \bP_{a,0} = \bI_{\ell/s^{z+1}}\otimes(\bI_{s^z}\bt \cK_{h}^{(m)}),
	\end{align*}
	where
	$$
	0<z = \begin{cases}
		a & \text{if~} g=0,    \\
		g & \text{if~} g\neq0.
	\end{cases}
	$$
	In addition, we know that $$
	(\bP_{a,0}\otimes\bI_m)(\bI_\ell\otimes L^{(m)}(x)) \bP_{a,0} = \bI_\ell\otimes L^{(m)}(x).
	$$
	Then one can check that the following matrix
	\begin{equation*}
		 \left [\begin{array}{c|c}
			\bP_{a,0}\otimes \bI_{t} & \bO                        \\
			\hline
			\bO                      & P_{a,0}\otimes \bI_{(m-t)}
		\end{array}\right] V (\bP_{a,0}\otimes \bI_{m}).
	\end{equation*}
	satisfies all the three conditions in this lemma.

	\section{Proof of Lemma~\ref{lemma:repairA}}\label{sect:proof_repairA}
	In this section, we write $L_i^{(t)} = L^{(t)}(x_i)$ for convenience.
	In light of the $\bar n$-digit base-$s$ expansion of the index $i\in [\ell]$ given in \eqref{eq:base_s}, it becomes apparent that multiplying any matrix with $\ell$ rows on the left by $R_{a,b}$ is tantamount to extracting a specific subset of $\bar \ell = \ell/s$ rows.
	These rows are characterized by the condition that each index $i$ satisfies $i_a = b$.
	For each $a\in [\bar n]$ and $b\in [s]$ we define the function $$\tau_{a,b} : [\bar{\ell}] \to [\ell]$$ by
	\begin{equation*}
		\begin{aligned}
			\tau_{a,b}(i) = & \sum_{z\in [a]}i_zs^z + bs^{a} + \sum_{z\in [\bar n - 1]\setminus[a]}i_zs^{z+1}.
		\end{aligned}
	\end{equation*}
	Recall that $R_{a,b}$ defined in \eqref{eq:Rab} is an $\bar{\ell}\times \ell$ matrices.
	We can check that for $i\in [\bar{\ell}]$ and $j\in [\ell]$,
	\begin{equation*}
		R_{a,b}(i,j) = \begin{cases}
			1 & \text{if~} j = \tau_{a,b}(i), \\
			0 & \text{otherwise.}
		\end{cases}
	\end{equation*}
	
	For any $\ell \times \ell$ block matrix $M$ with block entries of size $t\times 1$, we know that
	$\bar M = (R_{a,b} \otimes \bI_t) M R_{a,z}^T$
	is an $\bar{\ell} \times \bar{\ell}$ block matrix.
	Furthermore, $\bar M$ is obtained by choosing those block rows and columns whose indices satisfy $i_a = b$ and $j_a = z$, where $i$ and $j$ are the index of block row and (block) column, respectively.
	
	For any $g\in [\bar n], h\in [s+1], z\in [s]$ and $i,j\in [\bar{\ell}]$, we have
	\begin{align*}
		[(R_{a,b}\otimes\bI_t)\cM_{g,h}^{(t)} R_{a,z}^T](i,j) = \cM_{g,h}^{(t)}(\tau_{a,b}(i), \tau_{a,z}(j)).
	\end{align*}
	Further, using the  $\bar n$-digit base-$s$ expansion of index \eqref{eq:base_s} and \eqref{eq:cM_base_s}, we have
	\begin{align}
		& \cM_{g,h}^{(t)}(\tau_{a,b}(i), \tau_{a,z}(j))\nonumber                                                                                    \\
		= & \begin{cases}
			\wn L_{(\tau_{a,z}(j))_g}^{(t)} & \text{if~} \tau_{a,b}(i) = \tau_{a,z}(j),                                           \\
			-L_{(\tau_{a,z}(j))_g}^{(t)}    & \text{if~} (\tau_{a,b}(i))_g = h \neq (\tau_{a,z}(j))_g, \text{~and}                \\
			& \hspace*{-1.8em }(\tau_{a,b}(i))_m =(\tau_{a,z}(j))_m, m\in [\bar n]\setminus\{g\}, \\
			\wn \bzero                      & \text{otherwise.}
		\end{cases}\label{eq:smallM}
	\end{align}
	
	In the following we only prove the part $(iii)$ of this lemma, i.e., the case of $g\neq a, h\in [s+1]$.
	From \eqref{eq:smallM} we can check that
	\begin{align*}
		\cM_{g,h}^{(t)}(\tau_{a,b}(i), \tau_{a,b}(j)) =
		\begin{cases}
			\wn L_{j_{\bar g}}^{(t)} & \text{if~} i = j,                                             \\
			-L_{j_{\bar g}}^{(t)}    & \text{if~} i_{\bar g}=h\neq j_{\bar g}   \text{~and}          \\
			& \hspace*{-.5in}i_m=j_m, m\in [\bar n-1]\setminus\{{\bar g}\}, \\
			\wn \bzero               & \text{otherwise.}
		\end{cases}
	\end{align*}
	Similarly as \eqref{eq:cM_base_s}, we can get
	$$(R_{a,b}\otimes\bI_t)\cM_{g,h}^{(t)} R_{a,b}^T = \bI_{\bar{\ell}/s^{\bar g+1}}\otimes (\bI_{s^{\bar g}}\bt \cK_h^{(t)}).$$
	Moreover, for $z\in [s]\setminus\{b\}$, by \eqref{eq:smallM} we have
	$$
	\cM_{g,h}^{(t)}(\tau_{a,b}(i), \tau_{a,z}(j)) = \bzero,
	$$
	i.e., $(R_{a,b}\otimes\bI_t)\cM_{g,h}^{(t)} R_{a,z}^T = \bO$.
	The arguments $(i)$ and $(ii)$ can be proved similarly as above.

	\section{Proof of Lemma~\ref{lemma:FieldSizeA}}\label{sect:prooflemma1}
	In this section, we write
	$\cK_{B} = \cK_{B}^{(|B|)}(x_{sB+[s]})$ and $\cM_{a,B} = \cM_{a,B}^{(|B|)}(x_{sB+[s]})$,
	where $sB+[s]:=\{sb+j: b\in B, j\in [s]\}$.
	Recall that the $ns$ distinct elements $\lambda_{[ns]}$ are required to make the square matrix $M_{a,B} = \cM_{a,B}(\lambda_{as^2+sB+[s]})$ invertible for all $a\in [\bar n]$ and non-empty subset $B\subseteq [s]$.
	By Corollary~\ref{corollary:invert} we have $\cM_{a,B}$ is invertible if and only if $\cK_{B}$ is invertible.
	Thus the local constraints~\eqref{eq:localAM} can be rewritten as
	\begin{equation}\label{eq:localA}
		\det(\cK_{B}(\lambda_{as^2+sB+[s]}))\neq 0, a\in [\bar n], \emptyset\neq B\subseteq[s],
	\end{equation}
	
	Following that, we define the function
	\begin{equation*}
		D(x_{[s^2]}) = \prod_{B\subseteq [s]} \det\big(\cK_{B}\big),
	\end{equation*}
	Then the local constraints \eqref{eq:localAM} and \eqref{eq:localA} is equivalent to that
	\begin{equation*}
		D(\lambda_{as^2+[s^2]})\neq 0, a\in [\bar n].
	\end{equation*}
	
	We will use the following well-known result in our proof.
	\begin{lemma}[Combinatorial Nullstellensatz, see, N. Alon \cite{alon_1999}]\label{lemma:alon}
		Let $F$ be an arbitrary field, and let $f = f(x_1, \dots, x_n)$ be a  polynomial in $F[x_1, \dots, x_n]$.
		Suppose the degree $\deg(f)$ of $f$ is $\sum_{i = 1}^{n}t_i$, where each $t_i$ is a nonnegative integer, and suppose the coefficient of $\prod_{i=1}^{n}x_i^{t_i}$ in $f$ is nonzero.
		Then, if $S_1,\cdots,S_n$ are subsets of $F$ with $|S_i| > t_i$, there are  $s_1\in S_1, s_2\in S_2, \dots, s_n\in S_n$ so that
		$$f(s_1, \dots, s_n)\neq0.$$
	\end{lemma}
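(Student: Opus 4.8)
The plan is to follow the classical argument: assume for contradiction that $f$ vanishes on the whole grid $S_1\times\cdots\times S_n$, write $f$ as a combination of the vanishing polynomials of the $S_i$ with good degree control, and then derive a contradiction from the coefficient of $\prod_{i=1}^n x_i^{t_i}$. First I would reduce to the case $|S_i|=t_i+1$ for every $i$ by discarding surplus elements of each $S_i$ (this only strengthens the hypothesis). For each $i$ set $g_i(x_i)=\prod_{s\in S_i}(x_i-s)$; this is monic in $x_i$ of degree $t_i+1$, vanishes exactly on $S_i$, and can be written as $g_i=x_i^{t_i+1}-p_i(x_i)$ with $\deg p_i\le t_i$.

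The key step is to produce $h_1,\dots,h_n\in F[x_1,\dots,x_n]$ with $f=\sum_{i=1}^n h_ig_i$ and $\deg h_i\le \deg f-(t_i+1)$ for all $i$. I would obtain this by reducing $f$ modulo $g_1,\dots,g_n$: repeatedly take a monomial of the current polynomial divisible by some $x_i^{t_i+1}$ and replace that factor by $p_i(x_i)$, recording the corresponding quotient monomial in $h_i$; since $\deg p_i\le t_i$, each replacement strictly lowers the degree in $x_i$ of the monomial it acts on and never raises the total degree, so the process terminates and yields $f=\sum_i h_ig_i+\bar f$ where $\bar f$ has degree at most $t_i$ in each $x_i$ and each $h_i$ satisfies the claimed bound. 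Now $\bar f$ vanishes on $S_1\times\cdots\times S_n$ because every $g_i$ does, and here I invoke the elementary fact that a polynomial whose degree in each variable $x_i$ is less than $|S_i|$ and which vanishes on $S_1\times\cdots\times S_n$ is identically zero. This is proved by induction on $n$: fixing $x_2,\dots,x_n$ at arbitrary values in $S_2\times\cdots\times S_n$ makes $\bar f$ a univariate polynomial in $x_1$ of degree $<|S_1|$ with $|S_1|$ roots, hence the zero polynomial, so each of its $x_1$-coefficients vanishes on $S_2\times\cdots\times S_n$, to which the inductive hypothesis applies. Thus $\bar f=0$ and $f=\sum_i h_ig_i$.

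It remains to compare the coefficient of the monomial $M=\prod_{i=1}^n x_i^{t_i}$, whose total degree is $\sum_i t_i=\deg f$. By assumption this coefficient is nonzero in $f$. On the other hand $\deg(h_ig_i)\le(\deg f-(t_i+1))+(t_i+1)=\deg f$, so any monomial of $h_ig_i$ of total degree $\deg f$ is a product of a top-degree monomial of $h_i$ with the leading term $x_i^{t_i+1}$ of $g_i$, and in particular is divisible by $x_i^{t_i+1}$, whereas $M$ is not. Hence $M$ has coefficient $0$ in every $h_ig_i$, so coefficient $0$ in $f=\sum_i h_ig_i$, contradicting the hypothesis; therefore $f$ cannot vanish on all of $S_1\times\cdots\times S_n$. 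The part I expect to need the most care is the degree-controlled division: one must check both that the rewriting terminates and, crucially, that it never increases the total degree, since the bound $\deg h_i\le\deg f-(t_i+1)$ is exactly what powers the final coefficient comparison; the ``few roots forces zero'' induction and the coefficient bookkeeping are then routine.
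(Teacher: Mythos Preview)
Your argument is correct and is essentially Alon's original proof of the Combinatorial Nullstellensatz. Note, however, that the paper does not prove this lemma at all: it simply states it as a well-known result with a citation to \cite{alon_1999}, so there is no paper proof to compare against. Your write-up is a faithful reproduction of the standard proof, and the points you flag as needing care (termination and non-increase of total degree in the reduction step) are exactly the right ones.
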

	
	\begin{corollary}\label{corollary:1}
		Let $F$ be an arbitrary field, and let $f = f(x_1, \dots, x_n)$ be a  polynomial in $F[x_1, \dots, x_n]$.
		Suppose the degree $\deg(f)$ of $f$ is $\sum_{i = 1}^{n}t_i$, where each $t_i$ is a nonnegative integer,
		and suppose the coefficient of $\prod_{i=1}^{n}x_i^{t_i}$ in $f$ is nonzero.
		Let $S\subseteq F$ be a subset with size $|S| > \max(t_1, \dots, t_n)$.
		Then for any $i\in\{1,2,\dots,n\}$ and any $T\subseteq S$ of size $|T|>t_i$, there exists $s \in T$ such that
		$$f(x_1, \dots, x_{i-1}, s, x_{i+1} \dots, x_n)$$
		is a nonzero polynomial in $F[x_1, \dots, x_{i-1},x_{i+1}, \dots, x_n]$.
	\end{corollary}
	\begin{proof}
		As $|S|>\max(t_1,\dots, t_n)$, for $1\leq j\leq n, j\neq i$, we can choose a subset $S_j\subseteq S$ with $S_j>t_j$.
		According to Lemma~\ref{lemma:alon}, there are $s_j\in S_j,1\leq j\leq n, j\neq i$ and $s\in T$ such that
		$f(s_1,\dots,s_{i-1},s,s_{i+1} \dots, s_n)\neq0.$
		Thus $f(x_1, \dots, x_{i-1}, s, x_{i+1}, \dots, x_n)$ is a nonzero polynomial with $n-1$ variables.
	\end{proof}
	
	\begin{corollary}\label{corollary:2}
		Let $F$ be an arbitrary field, and let $f = f(x_1, \dots, x_n)$ be a nonzero polynomial in $F[x_1, \dots, x_n]$.
		Let $d_i$ be the largest power of $x_i$ occurring in any monomial of $f$, and $d = \max(d_1, \dots, d_n)$.
		Let $S$ be a subset of $F$ with $|S| \ge n + d$.
		Then there exist  $n$ distinct elements $s_1, s_2,\dots, s_n\in S$ such that
		$$f(s_1, \dots, s_n)\neq0.$$
	\end{corollary}
	
	\begin{proof}
		We can find $s_1,s_2,\dots,s_n$ as follows.
		\begin{itemize}
			\item[1)]
			
			Let $S^{(1)} = S$.
			Suppose the degree of $f(x_1,\dots,x_n)$ is $\sum_{i=1}^{n}t_i$,
			and the coefficient of $\prod_{i=1}^{n}x_i^{t_i}$ in $f(x_1,\dots,x_n)$ is nonzero.
			Since $|S^{(1)}|\ge d+n> \max(t_1,\dots,t_n)$, by Corollary~\ref{corollary:1} we can find $s_1\in S^{(1)}$ such that $f(s_1,x_2,\dots,x_n)$ is a nonzero polynomial.
			
			\item [2)]
			
			Let $S^{(2)} = S \setminus\{s_1\}$.
			Suppose the degree of $f(s_1$, $x_2,\dots,x_n)$ is $\sum_{i=2}^{n}t_i'$, and the coefficient of $\prod_{i=2}^{n}x_i^{t_i'}$ in $f(s_1,x_2,\dots,x_n)$ is nonzero.
			Since $|S^{(2)}|\ge d+n-1> \max(t_2',\dots,t_n')$, we can find $s_2\in S^{(2)}$ such that $f(s_1,s_2,x_3,\dots,x_n)$ is a nonzero polynomial.
		\end{itemize}
		We can repeat this process for $s_3,\dots,s_n$, and this concludes the proof.
	\end{proof}
	
	Now we give the proof of Lemma~\ref{lemma:FieldSizeA}.
	We first prove that the function $D(x_{[s^2]})$ is a nonzero polynomial in $\Fq[x_{[s^2]}]$, and its degree of each variable $x_i, i\in [s^2]$ is $(s-1)2^{s-2}$.
	Then by Corollary~\ref{corollary:2} we can choose $\lambda_{[s^2]}$ from a subset $S\subseteq \Fq$ with size $|S| \ge s^2+(s-1) 2^{s-2}$ satisfying that $D(\lambda_{[s^2]})\neq0$.
	
	Suppose that $B = \{b_0,\dots, b_{t-1}\}$ and $b_0 < \dots< b_{t-1}$.
	We regard $\det(\cK_{B})$ as a polynomial function of variables $x_{bs+j}, b\in B , j\in [s]$.
	Next, we show that $\det(\cK_{B})$ is a nonzero polynomial and $\deg(\det(\cK_{B})) = \frac{st(t-1)}{2}$.
	To see this,  recall that $\cK_{B}$ is an $st\times st$ square matrix, and nonzero entries on the same row have the same degree.
	To be specific, for $i\in [s], j\in [t]$,  all the nonzero entries in the $it+j$th row of $\cK_{B}$ have the same degree $j$.
	Therefore, $\det(\cK_{B})$ is a homogeneous polynomial.
	Note that the term
	\begin{equation}\label{eq:KBterm}
		\prod_{j\in [s]}1\cdot \prod_{j\in [s]} x_{b_1s+j}^1 \cdot \prod_{j\in [s]} x_{b_2s+j}^2  \cdots \prod_{j\in [s]} x_{b_{t-1}s+j}^{t-1}
	\end{equation}
	appears in the expansion of $\det(\cK_B)$, and can only be obtained by
	picking constant entries from $\cK_{b_0}^{(t)}(x_{sb_0+[s]})$, degree-$1$ entries from $\cK_{b_1}^{(t)}(x_{sb_1+[s]})$, and so on.
	Therefore, we can conclude that $\det(\cK_{B})$ is a homogeneous polynomial of degree $(0+1+\dots+t-1)s=\frac{st(t-1)}{2}.$
	In addition, the largest power of $\lambda_i$ occurring in any monomial of $\det(\cK_{B})$ is $t-1$.
	So that $D$ is a homogeneous polynomial 
	of degree $$\deg(D) = \sum_{t=1}^{s}{\binom{s}{t}}\frac{st(t-1)}{2} = s^2(s-1)2^{s-3},$$ and the largest power of $\lambda_i, 1\leq i\leq s^2-1$, occurring in any monomial of $D$ is $$\deg_{x_i}(D) = \sum_{t=1}^{s}{\tbinom{s-1}{t-1}}(t-1) = (s-1) 2^{s-2}.$$
	Applying Corollary~\ref{corollary:2} to $D(x_{[s^2]})$,
	we can find $s^2$ distinct elements $\lambda_0, \lambda_1, \dots, \lambda_{s^2-1}$ in any subset $S\subseteq\Fq$ with $|S|\ge s^2 + (s-1) 2^{s-2}$ satisfying
	$D(\lambda_{[s^2]})\neq0$.
	
	Suppose $q \ge ns + (s-1) 2^{s-2}$.
	We first choose $s^2$ distinct elements $\lambda_{[s^2]}$ from $\Fq$ satisfying
	the local constraints for group $0$.
	Then we choose $\lambda_{s^2+[s^2]}$ from $\Fq\setminus \lambda_{[s^2]}$ satisfying
	the local constraints for group $1$.
	As $q \ge ns + (s-1) 2^{s-2}$, we can repeat this process $n/s$ times.
	
	Now we prove the latter part of Lemma~\ref{lemma:FieldSizeA}.
	Consider the polynomial $$\Phi(x) = D(1,x,x^2\dots, x^{s^2-1}),$$ i.e, we assign the value $x_i = x^i$ in the polynomial $D(x_{[s^2]})$.
	Firstly, we claim that $\Phi(x)$ is a nonzero polynomial in $\Fq[x]$.
	We know that
	\begin{align*}
		\Phi(x) = & D(1,x,\dots, x^{s^2-1})                                \\
		=         & \prod_{B\subseteq [s]}\cK_B\big(x^i, i\in Bs+[s]\big).
	\end{align*}
	If $B = \{b_0, \dots, b_{t-1}\}$ and $b_0<\dots<b_{t-1}$,
	then using the same method as \eqref{eq:KBterm}, we can verify that the term
	$$
	\prod_{z\in [t]}\prod_{j\in [s]}(x^{b_zs+j})^z  = x^{d_B},
	$$
	where $d_{B} = \sum_{z\in [t], j\in [s]}z( b_zs+j)$,
	is the unique highest term in the polynomial $\cK_B(x^i, i\in Bs+[s])$.
	Thus, $\Phi(x)$ is a nonzero polynomial.
	
	Suppose that $q\geq ns+1$ and $n$ is large enough compared with $s$ such that $$\varphi(q-1) > \deg(\Phi(x))\ge 0$$ where $\varphi(x)$ is the Euler function.
	Since $\deg(\Phi(x))$ only depends on the value of $s$,  the above condition can hold if $n$ is large enough compared with $s$.
	Note that there are $\varphi(q-1)$ primitive elements in $\Fq$, so we can find some primitive element $\alpha$ of $\Fq$ such that $$\Phi(\alpha) = D(1, \alpha , \alpha^2, \dots, \alpha^{s^2-1})\neq 0.$$
	Since $D(x_{[s^2]})$ is a homogeneous polynomial, we know that
	\begin{align*}
		& D(\alpha^{as^2} \cdot (1,\alpha, \alpha^2, \dots, \alpha^{s^2-1}))               \\
		= & \alpha^{as^2\deg(D)} \cdot  D(1,\alpha, \alpha^2, \dots, \alpha^{s^2-1}) \neq 0.
	\end{align*}
	In other words, if we set $\lambda_i = \alpha^i$ for all $i\in [ns]$.
	Then these $\lambda_i$s satisfy the local constraints~\eqref{eq:localA}.

	This concludes our proof of Lemma~\ref{lemma:FieldSizeA}.

	
	\bibliographystyle{ieeetr}
	\bibliography{msr}

\end{document}